\DeclareMathOperator{\supp}{supp}
\theoremstyle{definition}
\theoremstyle{remark}
\newtheorem{remark}{Remark}
\theoremstyle{plain}
\newtheorem{theorem}{Theorem}
\newtheorem{lemma}{Lemma}
\begin{document}
\title{
\vspace{1cm} {\bf Microscopic and soliton-like solutions of the Boltzmann--Enskog and generalized Enskog  equations for elastic and inelastic\\ hard spheres}
}
\author{A.\,S.~Trushechkin\bigskip
 \\
{\it  Steklov Mathematical Institute of the Russian Academy of Sciences}
\\ {\it Gubkina St. 8, 119991 Moscow, Russia}\medskip\\
{\it  National Research Nuclear University ``MEPhI''}
\\ {\it Kashirskoe Highway 31, 115409 Moscow, Russia}\bigskip
\\ e-mail:\:\href{mailto:trushechkin@mi.ras.ru}{\texttt{trushechkin@mi.ras.ru}}}

\date{}
\maketitle

\begin{abstract}

N.\,N.~Bogolyubov  discovered that the Boltzmann--Enskog kinetic equation has microscopic solutions. They have the form of sums of delta-functions and correspond to trajectories of individual hard spheres. But the rigorous sense of the product of the delta-functions in the collision integral was not discussed. Here we give a rigorous sense to these solutions by introduction of a special regularization of the delta-functions. The crucial observation is that the collision integral of the Boltzmann--Enskog equation coincides with that of the first equation of the BBGKY hierarchy for hard spheres if the special regularization to the delta-functions is applied. This allows to reduce the nonlinear Boltzmann--Enskog equation to the BBGKY hierarchy of linear equations in this particular case.

Also we show that similar functions are exact smooth solutions for the recently proposed generalized Enskog equation. They can be referred to as ``particle-like'' or ``soliton-like'' solutions and are analogues of multisoliton solutions of the Korteweg--de Vries equation. 
\end{abstract}

\section{Introduction}

Kinetics of hard sphere gases is a classical field of physical and mathematical research and it is still actively developed \cite{GerasGap, Simon, Gran, BorGer}. One of kinetic equations in the focus of attention is the Boltzmann--Enskog kinetic equation. It is a generalization of the Boltzmann equation for the case of moderately dense (rather than dilute) gases. As the Boltzmann equation, it describes the time-irreversible behaviour and entropy production \cite{BelLach,Resibois}.

However, N.\,N.~Bogolyubov \cite{Bogol75} (see also \cite{BogBog}) discovered that this equation has time-reversible microscopic solutions as well. They have form of sums of delta-functions and correspond to trajectories of individual hard spheres. Another interest feature of these solutions (along with their reversibility) is that the Boltzmann--Enskog
kinetic equation is believed to give only an approximate aggregate
description of the gas dynamics in terms of a single-particle
distribution function. The Bogolyubov's result shows that the
Boltzmann--Enskog equation can describe the gas dynamics in terms of
trajectories of individual hard spheres as well.

Also the Vlasov kinetic equation is known to have microscopic
solutions. They were discovered by A.\,A.~Vlasov himself
\cite{Vlasov}.

However, the justification of these solutions for the Boltzmann--Enskog equation is performed by Bogolyubov on the ``physical'' level of rigour. In particular, Bogolyubov did
not discuss the products of generalized functions in the collision
integral. 

A way to give a rigorous sense to the microscopic solutions is the use of a regularization for delta-functions. A rather general type of regularization and a direct proof of the existence of the limit for the collision integral is proposed in \cite{Trush}. But the proof is rather involved. 

Here we present a special kind of regularization. The crucial observation is that, under this regularization, the collision integral of the Boltzmann--Enskog equation coincides with that of the first equation of the BBGKY hierarchy for hard spheres. This allows to reduce the nonlinear Boltzmann--Enskog equation to the BBGKY hierarchy of linear equations in this particular case. This gives a rigorous sense to the microscopic solutions in a simpler and, in some sense, more natural and elegant way.

Also we show that similar construction leads to exact smooth solutions for the recently proposed generalized Enskog equation \cite{GerasGap}. In general, this equation describes the dynamics of infinite number of hard spheres. But our solutions are expressed in terms of the evolution operator of a  finite number of hard spheres.  These solutions can be referred to as ``particle-like'' (regularizations of microscopic solutions, which correspond to trajectories of particles) or ``soliton-like'' (evolution of well-localized  ``bumps'') solutions and are analogues of multisoliton solutions of the Korteweg--de Vries equation. 

The rest of the paper is organized as follows. In Section~\ref{SecBE}, we remind the Boltzmann--Enskog equation. The notion of microscopic solution is reminded in Section~\ref{SecMicroSol}. In Section~\ref{SecSol}, we construct a regularization for microscopic solutions of the Boltzmann--Enskog equation for elastic hard spheres and so, give a rigorous sense to them. In Section~\ref{SecInel}, we do the same for the case of inelastic hard spheres. Finally, in Section~\ref{SecGE} we remind the generalized Enskog equation and, in Section~\ref{SecGESol}, construct explicit soliton-like solutions for it.

\section{Boltzmann--Enskog equation for elastic hard spheres}\label{SecBE}

The   Boltzmann--Enskog kinetic equation describes the dynamics of a gas of hard spheres. For elastic hard spheres, it has the form
\begin{equation}\label{EqBE}
\frac{\partial f}{\partial t}=-v_1\frac{\partial f}{\partial r_1}+Q(f,f),
\end{equation}
\begin{equation*}
Q(f,f)(r_1,v_1,t)=a^2\int_{\Omega_{v_1}}
(v_{21},\sigma)[f(r_1,v'_1,t)f(r_1+a\sigma,v'_2,t)-f(r_1,v_1,t)f(r_1-a\sigma,v_2,t)]
d\sigma dv_2.
\end{equation*}
Here $f=f(r_1,v_1,t) \geq0$ is the density function of hard spheres, where $r_1$ is the position of the center of a sphere, $v_1$ is its velocity. To avoid the boundary effects, let us consider a gas in a three-dimensional torus $\mathbb T^3=\mathbb R^3/(\alpha\mathbb Z\times \beta\mathbb Z\times \gamma\mathbb Z)$, where $\alpha,\beta,\gamma>a$. The dynamics of hard spheres in a rectangular box with elastic reflection from the walls can be reduced to the dynamics on a torus. Then $r_1\in\mathbb T^3$,
$v_1\in\mathbb R^3$, $t\in\mathbb R$, $\frac{\partial}{\partial r_1}$ is the gradient in $r_1$, $a>0$ is a positive constant (the diameter of a sphere).

Further,
\begin{equation}\label{EqVprime}
\begin{aligned}
v'_1&=v_1+(v_{21},\sigma)\sigma,\\
v'_2&=v_2-(v_{21},\sigma)\sigma,
\end{aligned}
\end{equation}
$v_{21}=v_2-v_1$, $\sigma\in S^2$ ($S^2$ is a unit sphere in $\mathbb{R}^3$),
$(\cdot,\cdot)$ is a scalar product. $v_1$ and $v_2$ have the meaning of the velocities of two hard spheres just before their collision, $v'_1$ and $v'_2$ are the corresponding velocities just after the collision. In such interpretation, $\sigma$ is a unit vector directed from the center of the second sphere to the center of the first sphere. The expression $Q(f,f)$ is called the collision integral. The integration is performed over the region

$$\Omega_{v_1}=\{(\sigma,v_2):\,\sigma\in S^2, v_2\in\mathbb R^3, (v_{21},\sigma)\geq0\}\subset S^2\times\mathbb R^3.$$

This equation differs from the Boltzmann kinetic equation for hard spheres by the terms $\pm a\sigma$ in the arguments of $f$ in the collision integral. Thus, the Boltzmann equation assumes the size of spheres to be negligibly small (in comparison to the scale of spatial variation of $f$), while the Boltzmann--Enskog equation takes the size of spheres into account. Formally, the Boltzmann equation for hard spheres is obtained from the Boltzmann--Enskog equation by the Boltzmann--Grad limit: $a\to0$, $f$ is rescaled according to $a^{-2}f$. Namely, one can show that a function of the form $a^{-2}f(r,v,t)$ where $f$ is a solution of the Boltzmann--Enskog equation tends to a solution of the Boltzmann equation in the limit $a\to0$ \cite{ArkCerc89,ArkCerc}. For recent investigations of the Boltzmann--Enskog equation see \cite{HaNoh}.

\section{Microscopic solutions of the Boltzmann-Enskog equation}\label{SecMicroSol}

Let $(q^0_1,w^0_1,\ldots,q^0_N,w^0_N)\in\mathbb R^{6N}$ and  $|q_i^0-q_j^0|>a$ for all $i\neq j$. Also let a system of hard spheres with initial positions $q^0_1,\ldots,q^0_N$ and velocities  $w^0_1,\ldots,w^0_N$ suffer only double collisions, don't suffer so called grazing collisions (with $(v_{21},\sigma)=0$ in (\ref{EqVprime})), and don't suffer infinite number of collisions in a finite time. The measure of the  excluded phase points is zero \cite{CPG}.

Let $q_1(t),\ldots,q_N(t)$ and   $w_1(t),\ldots,w_N(t)$ be positions and velocities of $N$ hard spheres at time $t$ provided that their initial state is $(q^0_1,w^0_1,\ldots,q^0_N,w^0_N)$.

Consider the generalized function
\begin{equation}\label{Eqf}
f(r_1,v_1,t)=\sum_{i=1}^N\delta(r_1-q_i(t))\delta(v_1-w_i(t)).
\end{equation}
N.\,N.~Bogolyubov proved that it satisfies the Boltzmann--Enskog equation \cite{Bogol75,BogBog}. He
referred to solutions of this form as ``microscopic solutions'', because they correspond to 
trajectories of individual hard spheres. The Boltzmann--Enskog
kinetic equation is believed to give only an approximate aggregate
description of the gas dynamics in terms of a single-particle
distribution function. The Bogolyubov's result shows that the
Boltzmann--Enskog equation can describe the gas dynamics in terms of
trajectories of individual hard spheres as well.

Another surprising fact is that the microscopic solutions (for the case of elastic collisions) are
time-reversible, i.e., if we replace $t\to-t$ and $v\to-v$, then a
microscopic solution transforms into another microscopic solution. This is a consequence of the reversibility of $N$ hard spheres dynamics.
So, the Boltzmann--Enskog kinetic equation, which is known to
describe an irreversible dynamics and entropy production, has
time-reversible solutions as well.

However, Bogolyubov performed the proof just on the ``physical''
level of rigour. Namely, the formal substitution of function
(\ref{Eqf}) into the Boltzmann--Enskog equation yields products of
delta-functions in the collision integral,
which are not well-defined. The time derivative of this function in the left-hand side of the equation
also yields such products because of discontinuity of the velocities
$w_i(t)$.

A way to give a rigorous sense to the notion of microscopic solutions is to introduce a regularization, i.e., to consider a family of smooth delta-like functions of the form

$$f_\varepsilon(r_1,v_1,t)=\sum_{i=1}^N\delta_\varepsilon(r_1-q_i(t),v_1-w_i(t)),$$
$\varepsilon>0$, where $\delta_\varepsilon(r,v)\to\delta(r,v)$ as $\varepsilon\to0$, and to prove that the difference between left- and right-hand sides of (\ref{EqBE}) vanishes in some space of generalized functions as $\varepsilon\to0$ if we substitute $f_\varepsilon$ instead of $f$. This is done in \cite{Trush} under the assumption that the function $\delta_\varepsilon$ is compactly supported and bounded. But the proof is rather involved. Here we  consider a special and, in some sense, more natural type of regularization, which leads to a simpler and more elegant proof.

\section{Regularization of microscopic solutions}\label{SecSol}

Let $\delta^i_\varepsilon(r,v):\mathbb T^3\times\mathbb R^3\to\mathbb R$, be a family of  functions, $\varepsilon>0$, $i=1,2,\ldots,N$. Let the following properties be satisfied for all $\varepsilon>0$ and $i=1,2,\ldots,N$:

\begin{enumerate}[(i)]
\item $\delta^i_\varepsilon(r,v)$ is continuously differentiable,

\item $\displaystyle\int_{\mathbb T^3\times\mathbb R^3}\delta^i_\varepsilon(r,v)drdv=1$,

\item $\delta^i_\varepsilon(r,v)$ is compactly-supported. Moreover, for every neighbourhood of the origin, there exists $\varepsilon_0>0$ such that the supports of $\delta^i_\varepsilon$ lies in this neighbourhood for  all $\varepsilon<\varepsilon_0$ and $i=1,2,\ldots,N$.
\end{enumerate}

It follows from these properties that $\delta^i_\varepsilon(r,v)\to\delta(r)\delta(v)$ as $\varepsilon\to0$, where $\delta (\cdot)$ is the (three-dimensional) delta function.

Let, further, $(q^0_1,w^0_1,\ldots,q^0_N,w^0_N)\in(\mathbb T^3\times\mathbb R^3)^N$ and  $|q_i^0-q_j^0|>a$ for all $i\neq j$. As before, let a system of hard spheres with initial positions $q^0_1,\ldots,q^0_N$ and velocities  $w^0_1,\ldots,w^0_N$ suffer only double collisions, don't suffer so called grazing collisions, and don't suffer infinite number of collisions in a finite time.

Consider the function $f_{\varepsilon}(r_1,v_1,t)$, $\varepsilon>0$, defined as
\begin{equation}\label{Eqfeps0}
f_{\varepsilon}(r_1,v_1,0)\equiv f^0_\varepsilon(r_1,v_1)=\sum_{i=1}^N\delta^i_\varepsilon(r_1-q_i^0,v_1-w_i^0),
\end{equation}
\begin{multline}\label{Eqfeps}
f_{\varepsilon}(r_1,v_1,t)=f_\varepsilon(r_1-v_1(t-T_k),v_1,T_k)\\+
\frac1{N-1}\int_{B^+_{r_1,v_1,t-T_k}} f_\varepsilon(r_1-v_1t^*-v'_1(t-t^*-T_k)\sigma,v'_1,T_k)\\\times f_\varepsilon(r_2-v_2t^*-v'_2(t-t^*-T_k),v'_2,T_k) \,dr_2dv_2,
\end{multline}
for $t\in(T_{k},T_{k+1}]$ and $k=0,1,2,\ldots$. Here
$$B^+_{r_1,v_1,t}=\{(r_2,v_2):\,|r_2-r_1|\geq a,\:t^*(r_1,v_1,r_2,v_2)<t\},$$
$t^*=t^*(r_1,v_1,r_2,v_2)$ is the time to collision of two hard spheres in their backward evolution from the phase points $(r_1,v_1)$ and $(r_2,v_2)$ ($t^*=\infty$ if there is no collision of two hard spheres in their backward evolution in the absence of other spheres), $v'_1$ and $v'_2$ are defined by (\ref{EqVprime}) with
\begin{equation}\label{Eqsigma}
\sigma=\sigma(r_1,v_1,r_2,v_2)=r_2-r_1-(v_2-v_1)t^*(r_1,v_1,r_2,v_2)\in S^2\quad (\text{if $t^*(r_2-r_1,v_2-v_1)<\infty$)}.
\end{equation}
Further, $0=T_0<T_1<T_2<\ldots$, $T_k\to\infty$, are such that every hard sphere suffers no more than one collision on every interval $[T_k,T_{k+1}]$ provided that the initial state of $N$ hard spheres is  $(q^0_1,w^0_1,\ldots,q^0_N,w^0_N)$. Moreover, let there be no collisions at the instances $T_1,T_2,\ldots$.

Let us consider the space $\mathscr T=C^1_0(\mathbb T^3\times\mathbb R^3\times[0,\infty))$ of functions $\varphi(r,v,t)$ that are continuous, continuously differentiable in $r$ and $t$ and satisfy the property $\varphi(r,v,t)=0$ for all $(r,v)\in \mathbb T^3\times\mathbb R^3$ if $t>T$ ($T$ may be different for different functions). This is a space of test functions. The space of generalized functions $\mathscr T'$ is the space of linear continuous functionals over $\mathscr T$. 

\begin{theorem}\label{ThSol}
\begin{equation}\label{EqThSol}
\frac{\partial f_\varepsilon}{\partial t}+v_1\frac{\partial f_\varepsilon}{\partial r_1}+Q(f_\varepsilon,f_\varepsilon)\to0
\end{equation}
as $\varepsilon\to0$ in the space $\mathscr T'$.
\end{theorem}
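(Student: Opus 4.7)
The plan is to recognize (\ref{Eqfeps}) as an exact realization of a truncated first BBGKY equation, and then to identify the resulting BBGKY-type collision integral with $-Q(f_\varepsilon,f_\varepsilon)$ in the distributional limit. The key structural observation is that the quadruple
$(r_1-v_1 t^*-v'_1(t-t^*-T_k),\,v'_1,\,r_2-v_2 t^*-v'_2(t-t^*-T_k),\,v'_2)$
appearing in the integrand of (\ref{Eqfeps}) is precisely the image of $(r_1,v_1,r_2,v_2)$ under the backward two-body hard-sphere flow $\Phi^{(2)}_{-(t-T_k)}$: backward free flight of duration $t^*$ to the contact $|r_2-r_1|=a$, inversion of the elastic collision (giving pre-collision velocities $v'_1,v'_2$), then backward free flight for the remaining time $t-t^*-T_k$. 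Consequently, if one defines
$$F_2^\varepsilon(r_1,v_1,r_2,v_2,t):=\bigl(f_\varepsilon(\cdot,T_k)\otimes f_\varepsilon(\cdot,T_k)\bigr)\bigl(\Phi^{(2)}_{-(t-T_k)}(r_1,v_1,r_2,v_2)\bigr)$$
on $(T_k,T_{k+1}]$, then $F_2^\varepsilon$ solves the two-body Liouville equation with hard-sphere reflection, and (\ref{Eqfeps}) becomes the rewriting $f_\varepsilon(r_1,v_1,t)=f_\varepsilon(r_1-v_1(t-T_k),v_1,T_k)+\tfrac{1}{N-1}\int_{B^+} F_2^\varepsilon\,dr_2dv_2$.

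The second step is the BBGKY-type derivation. Applying $\partial_t+v_1\cdot\partial_{r_1}$ to this identity annihilates the streaming term. For the integral, the change of variable $\tilde r_2=r_2-r_1$ makes the integration domain $r_1$-independent, so the only moving-boundary contribution comes from $\partial_t$ acting on the surface $\{t^*=t-T_k\}$. Combining this with the two-body Liouville equation $\partial_t F_2^\varepsilon+v_1\cdot\partial_{r_1}F_2^\varepsilon=-v_{21}\cdot\partial_{\tilde r_2}F_2^\varepsilon$ and integrating the resulting divergence in $\tilde r_2$ by parts, the explicit identity $\nabla_{\tilde r_2}t^*=\sigma/(v_{21},\sigma)$ on $\{t^*=s\}$ produces an exact cancellation of the $\{t^*=s\}$-terms. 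What remains is the flux through the contact surface $\{|\tilde r_2|=a\}$, and the standard rearrangement using the reflection boundary condition $F_2^\varepsilon(r_1,v_1,r_1+a\sigma,v_2)=F_2^\varepsilon(r_1,v'_1,r_1+a\sigma,v'_2)$ together with a split by the sign of $(v_{21},\sigma)$ yields a BBGKY-type equation of the form
$$\partial_t f_\varepsilon+v_1\cdot\partial_{r_1}f_\varepsilon=\tfrac{a^2}{N-1}\!\int_{\Omega_{v_1}}\!(v_{21},\sigma)\bigl[F_2^\varepsilon(r_1,v'_1,r_1+a\sigma,v'_2,t)-F_2^\varepsilon(r_1,v_1,r_1-a\sigma,v_2,t)\bigr]d\sigma dv_2$$
holding pointwise on each open interval $(T_k,T_{k+1})$.

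The last step is to establish, in the limit $\varepsilon\to 0$, the factorization $\tfrac{1}{N-1}F_2^\varepsilon(r_1,v_1,r_1\pm a\sigma,v_2,t)=f_\varepsilon(r_1,v_1,t)\,f_\varepsilon(r_1\pm a\sigma,v_2,t)$ at the contact configurations, turning the BBGKY right-hand side into $Q(f_\varepsilon,f_\varepsilon)$. By property (iii) the supports of $f_\varepsilon(\cdot,T_k)$ concentrate on the discrete set $\{(q_i(T_k),w_i(T_k))\}$ as $\varepsilon\to 0$, so for $\varepsilon$ small any contact configuration in the support of $F_2^\varepsilon$ corresponds to a genuine pair $(i,j)$ of hard spheres colliding at time $t$. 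The hypothesis that each particle suffers at most one collision in $(T_k,T_{k+1})$, combined with the absence of grazing, triple, and accumulation collisions, forces the backward two-body trajectory from such a contact configuration to be a product of independent free flights (composed with at most the single instantaneous collision reversal in the post-collision case), and a parallel support analysis of (\ref{Eqfeps}) shows that $f_\varepsilon$ restricted to the corresponding one-particle supports reduces to its streaming piece. These two observations give the factorization on the distributionally relevant set. Continuity of $f_\varepsilon$ across each $T_k$, immediate from (\ref{Eqfeps}) since $B^+$ has empty interior at $t=T_k^+$, ensures no distributional delta terms appear at the switching times, and (\ref{EqThSol}) follows. The main obstacle I anticipate is this final step: one has to run the support analysis with care against arbitrary test functions in $\mathscr T$ and use the geometric hypotheses on the hard-sphere dynamics to rule out all residual cross-contributions between the phantom bumps generated by (\ref{Eqfeps}) at successive intervals.
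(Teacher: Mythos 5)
Your overall architecture is close to the paper's: you correctly read the integrand of (\ref{Eqfeps}) as the backward two-body hard-sphere flow applied to a tensor product, you aim to identify the resulting collision terms with a BBGKY-type integral via factorization at contact, and you iterate over the intervals $(T_k,T_{k+1}]$. But there is a concrete error in your second step, and it sits exactly where the paper's proof does its real work. The closure of the domain $B^+_{r_1,v_1,s}=\{|\tilde r_2|\ge a,\ t^*<s\}$ meets the contact sphere $\{|\tilde r_2|=a\}$ only on the \emph{post-collisional} hemisphere: at contact one has $t^*=0$ when $(v_{21},\tilde r_2)\ge 0$ and $t^*=\infty$ when $(v_{21},\tilde r_2)<0$, so the pre-collisional hemisphere is not part of $\partial B^+$ at all. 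Consequently your divergence-theorem computation (after the correct cancellation of the $\{t^*=s\}$ terms against the moving boundary) produces \emph{only the gain term}; there is nothing to ``split by the sign of $(v_{21},\sigma)$''. The loss term of the first BBGKY equation comes from the flux of $\tfrac1{N-1}\int_{B^-}F_2^\varepsilon$ through the pre-collisional hemisphere, and that $B^-$-weighted term is present in the true one-particle marginal $F_1^\varepsilon$ but is replaced in (\ref{Eqfeps}) by the bare streaming term $f_\varepsilon(r_1-v_1(t-T_k),v_1,T_k)$, which the transport operator annihilates identically. So the ``BBGKY-type equation holding pointwise'' that you assert for $f_\varepsilon$ is false: differentiating (\ref{Eqfeps}) directly yields no loss term, while $Q(f_\varepsilon,f_\varepsilon)$ contains a loss term that is \emph{not} distributionally negligible (near a collision time it integrates against a test function to a finite nonzero value, which is precisely what must cancel the jump of the velocity in the transport term).

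This is why the paper does not differentiate (\ref{Eqfeps}) at all. It introduces the $N$-particle density $D^\varepsilon$, takes $F_1^\varepsilon,F_2^\varepsilon$ to be its genuine marginals (which satisfy the full first BBGKY equation (\ref{EqBEBeps}), gain \emph{and} loss, by the cited rigorous derivations), proves the contact factorization (\ref{EqFactor}) exactly for small $\varepsilon$ (Lemma~\ref{LemFactor}), and then reduces the theorem to showing that $f_\varepsilon-F_1^\varepsilon$, paired with $\partial_t\varphi+v_1\partial_{r_1}\varphi$, vanishes in the limit; by the explicit formula (\ref{EqF1explicit}) this difference is exactly $f^0_\varepsilon(r_1-v_1t,v_1)\bigl[1-\tfrac1{N-1}\int_{B^-}f^0_\varepsilon(r_2-v_2t,v_2)\,dr_2dv_2\bigr]$, and the support analysis confines it to shrinking neighbourhoods of the collision times. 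Your proposal has no counterpart of this comparison, and without it the loss term is unaccounted for. (A secondary issue: your claimed factorization $\tfrac1{N-1}F_2^\varepsilon=f_\varepsilon\,f_\varepsilon$ at contact is inconsistent in normalization with your own definition $F_2^\varepsilon=(f_\varepsilon\otimes f_\varepsilon)\circ\Phi^{(2)}_{-(t-T_k)}$ and with the absence of a $\tfrac1{N-1}$ prefactor in $Q$; the paper's Lemma~\ref{LemFactor} factorizes $F_2^\varepsilon$ into the free-streamed initial bumps without such a factor.)
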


\begin{proof}
Consider the first interval $[0,T_1]$ and the function
$$D^\varepsilon(r_1,v_1,\ldots,r_N,v_N,t)=\frac{1}{N!}S^{(N)}_{-t}\left\lbrace\prod_{i=1}^Nf^0_\varepsilon(r_i,v_i)\right\rbrace.$$
Here, $S^{(N)}_t$ is the operator of the shift along the phase space trajectory of $N$ hard spheres:
$$S^{(N)}_t \varphi(r_1,v_1,\ldots,r_N,v_N)=\varphi(R_1(t),V_1(t),\ldots,R_N(t),V_N(t)),$$
where $\varphi$ is an arbitrary function and $R_1(t),V_1(t),\ldots,R_N(t),V_N(t)$ are the positions and velocities of $N$ hard spheres  at time $t$ provided that their initial positions and velocities are $r_1,v_1,\ldots,r_N,v_N$. The operator $S^{(N)}_t$ is defined almost everywhere in $G_N$, where
\begin{equation}\label{EqGk}
G_k=\{(r_{N-k+1},v_{N-k+1},\ldots,r_N,v_N)\in(\mathbb T^3\times\mathbb R^3)^k: |r_i-r_j|\geq a, \:i,j=1,\ldots,N,\:i\neq j\}.
\end{equation}

$D^\varepsilon$ can be regarded as a probability density function for a system of $N$ hard spheres. Consider the single-particle and two-particle distribution functions: 
\begin{equation}\label{EqF12}
\begin{aligned}
F^\varepsilon_1(r_1,v_1,t)&= N\int_{G_{N-1}} D^\varepsilon(r_1,v_1,\ldots,r_N,v_N,t)\prod_{j=2}^N dr_jdv_j,\\
F^\varepsilon_2(r_1,v_1,r_2,v_2,t)&=N(N-1)\int_{G_{N-2}} D^\varepsilon(r_1,v_1,\ldots,r_N,v_N,t)\prod_{j=3}^N dr_jdv_j.\end{aligned}
\end{equation}
They satisfy the first equation of the BBGKY hierarchy for elastic hard spheres (initially derived in \cite{Cerc}, rigorously justified in \cite{Spohn-BBGKY,IP,PetrinaGer}, a derivation for the most general case is given in \cite{Simon}):
\begin{multline}\label{EqBEBeps}
\frac{\partial F^\varepsilon_1(r_1,v_1,t)}{\partial t}=-v_1\frac{\partial F^\varepsilon_1(r_1,v_1,t)}{\partial r_1}+a^2\int_{\Omega_{v_1}}
(v_{21},\sigma)[F^\varepsilon_2(r_1,v'_1,r_1+a\sigma,v'_2,t)-\\-F^\varepsilon_2(r_1,v_1,r_1-a\sigma,v_1,t)]
d\sigma dv_2.
\end{multline}

In Lemma~\ref{LemFactor}, we prove that

\begin{equation}\label{EqFactor}
F_2^\varepsilon(r_1,v_1,r_1-a\sigma,v_2,t)=F_2^\varepsilon(r_1-v_1t,v_1,r_1-a\sigma-v_2t,v_2,0)= f^0_\varepsilon(r_1-v_1t,v_1)f^0_\varepsilon(r_1-a\sigma-v_2t,v_2)
\end{equation}
for all $r_1,v_1,v_2,$ and $\sigma$ such that $(v_2-v_1,\sigma)\geq 0$, for all $t\in[0,T_1]$, and for sufficiently small $\varepsilon$. As a corollary, since $(v'_2-v'_1,-\sigma)=(v_2-v_1,\sigma)\geq0$, we have
\begin{equation*}
F_2^\varepsilon(r_1,v'_1,r_1+a\sigma,v'_2,t)=F_2^\varepsilon(r_1-v'_1t,v'_1,r_1+a\sigma-v'_2t,v'_2,0)=f^0_\varepsilon(r_1-v'_1t,v'_1)f^0_\varepsilon(r_1+a\sigma-v'_2t,v'_2)
\end{equation*}
as well.

If we substitute function (\ref{Eqfeps}) into the Boltzmann--Enskog equation, we will obtain (remind that we consider the time interval $[0,T_1]$)
\begin{multline*}
\frac{\partial f_\varepsilon(r_1,v_1,t)}{\partial t}=-v_1\frac{\partial f_\varepsilon(r_1,v_1,t)}{\partial r_1}+a^2\int_{\Omega_{v_1}}
(v_{21},\sigma)[f^0_\varepsilon(r_1-v'_1t,v'_1)f^0_\varepsilon(r_1+a\sigma-v'_2t,v'_2)-\\-f^0_\varepsilon(r_1-v_1t,v_1)f^0_\varepsilon(r_1-a\sigma-v_2t,v_2)]
d\sigma dv_2.
\end{multline*}
By view of (\ref{EqFactor}), the collision integral coincides with that of the first equation of the BBGKY hierarchy. This is the crucial observation. We can rewrite the last equation as
\begin{multline}\label{EqFf}
\frac{\partial f_\varepsilon(r_1,v_1,t)}{\partial t}=-v_1\frac{\partial f_\varepsilon(r_1,v_1,t)}{\partial r_1}+a^2\int_{\Omega_{v_1}}
(v_{21},\sigma)[F_2^\varepsilon(r_1,v'_1,r_1+a\sigma,v'_2,t)-\\-F_2^\varepsilon(r_1,v_1,r_1-a\sigma,v_2,t)]
d\sigma dv_2.
\end{multline}

Let us consider the subspace $\mathscr T_{T_1}\subset \mathscr T$ of test functions $\varphi(r,v,t)\in\mathscr T$ with additional condition $\varphi(r,v,t)=0$ for all $r$ and $v$ if $t>T_1$ and the corresponding space of generalized functions $\mathscr T'_{T_1}$. Let us show that (\ref{EqThSol}) is satisfied in $\mathscr T'_{T_1}$. Comparing (\ref{EqBEBeps}) and (\ref{EqFf}), we see that, for this aim, it suffices to show that
$$
\frac{\partial f_\varepsilon(r_1,v_1,t)}{\partial t}+v_1\frac{\partial f_\varepsilon(r_1,v_1,t)}{\partial r_1}-\frac{\partial F_1^\varepsilon(r_1,v_1,t)}{\partial t}-v_1\frac{\partial F_1^\varepsilon(r_1,v_1,t)}{\partial r_1}\to0\quad\text{in }\mathscr T',
$$
or
\begin{equation}\label{EqLimfphi}
\int_{\mathbb T^3\times\mathbb R^3\times[0,\infty)}[f_\varepsilon(r_1,v_1,t)-F_1^\varepsilon(r_1,v_1,t)]\left[\frac{\partial \varphi(r_1,v_1,t)}{\partial t}+v_1\frac{\partial \varphi(r_1,v_1,t)}{\partial r_1}\right]dr_1dv_1dt\to0
\end{equation}
 as $\varepsilon\to0$, for an arbitrary test function $\varphi\in\mathscr{T}_{T_1}$.

According to Lemma~\ref{LemF1}, the function $F_1^\varepsilon(r_1,v_1,t)$ has the form
\begin{equation}\label{EqF1explicit}
\begin{split}
F_1^{\varepsilon}(r_1,v_1,t)&=\frac1{N-1} f^0_\varepsilon(r_1-v_1t,v_1)
\int_{B^-_{r_1,v_1,t}}f_\varepsilon^0(r_2-v_2t,v_2)\,dr_2dv_2\\&+
\frac1{N-1}\int_{B^+_{r_1,v_1,t}} f^0_\varepsilon(r_1-v_1t^*-v'_1(t-t^*)\sigma,v'_1)\,f^0_\varepsilon(r_2-v_2t^*-v'_2(t-t^*),v'_2) \,dr_2dv_2,
\end{split}
\end{equation}
where
$$B^-_{r_1,v_1,t}=\{(r_2,v_2):\,|r_2-r_1|\geq a,\:t^*(r_1,v_1,r_2,v_2)>t\}.$$
So, it differs from $f_\varepsilon(r_1,v_1,t)$ only by the factor 
$\frac1{N-1}\int_{B^-_{r_1,v_1,t}}f_\varepsilon^0(r_2-v_2t,v_2)\,dr_2dv_2$ in the first term. The region of integration $B^-_{r_1,v_1,t}$ is the set of states $(r_2,v_2)$ of the second sphere that do not lead to a collision with the first sphere in time $t$ of the backward evolution given the state of the first sphere is $(r_1,v_1)$. Let us express the first term of the right-hand side of (\ref{EqF1explicit}) as a sum:
\begin{multline}\label{EqFirstTerm}
\frac1{N-1} f^0_\varepsilon(r_1-v_1t,v_1)
\int_{B^-_{r_1,v_1,t}}f_\varepsilon^0(r_2-v_2t,v_2)\,dr_2dv_2\\=
\frac1{N-1}\sum_{i=1}^N \delta^{(i)}_\varepsilon(r_1-v_1t-q_i^0,v_1-w_i^0)
\int_{B^-_{r_1,v_1,t}}f_\varepsilon^0(r_2-v_2t,v_2)\,dr_2dv_2
\end{multline}
and consider the first term in this sum
$$\frac1{N-1} \delta^{(i)}_\varepsilon(r_1-v_1t-q_i^0,v_1-w_i^0)
\int_{B^-_{r_1,v_1,t}}f_\varepsilon^0(r_2-v_2t,v_2)\,dr_2dv_2.$$

If  sphere 1 suffers no collisions on the interval $[0,T_1]$, then 
\begin{equation}
\frac1{N-1}\int_{B^-_{r_1,v_1,t}}f_\varepsilon^0(r_2-v_2t,v_2)\,dr_2dv_2=
\frac1{N-1}\int_{(\mathbb T^3\backslash B_a(r_1))\times\mathbb R^3}f_\varepsilon^0(r_2-v_2t,v_2)\,dr_2dv_2=1,
\end{equation}
where $B_a(r_1)\subset\mathbb R^3$ is the ball with the center $r_1$ and the radius $a$.

Now let us assume that  sphere 1 collides with another sphere (without loss of generality let it have the number 2) and the moment of the collision is $t_1\in (0,T_1)$. Then for every $\Delta t>0$, there exist $\varepsilon>0$ such that all moments of collisions of spheres with initial phase points $(r_i,v_i)\in\supp\delta^{(i)}_\varepsilon(r_i-q_i^0,v_i-w_i^0)$, $i=1,2$, lies in the interval $(t_1-\Delta t,t_1+\Delta t)$. This means that
$$\frac1{N-1}\int_{B^-_{r_1,v_1,t}}f_\varepsilon^0(r_2-v_2t,v_2)\,dr_2dv_2=1$$ if $t<t_1-\Delta t$ (all phase points $(r_2,v_2)$ do not lead to a collision in the backward evolution on time $t$) and $$\delta^{(i)}_\varepsilon(r_1-v_1t-q_i^0,v_1-w_i^0)=0$$ if $t>t_1+\Delta t$ (all phase points $(r_2,v_2)$ lead to a collision, hence, there is no way for the first sphere to be in the phase point $(r_1,v_1)$ at the moment $t$ by free propagation from $(r_1-v_1t,v_1)$\footnote{Note that we use the absence of grazing collisions here. Otherwise, there are phase points $(r_1,v_1)$ such that the first sphere can get to them by two ways: by both free propagation and a collision with the second sphere (from different initial states of the two spheres within $\supp\delta^{(i)}_\varepsilon(r_i-q_i^0,v_i-w_i^0)$, $i=1,2$). Then $\delta^{(i)}_\varepsilon(r_1-v_1t-q_i^0,v_1-w_i^0)$ can be non-zero even if all phase points $(r_2,v_2)$ lead to a collision. But if the collision of the hard spheres with the initial states $(q_i^0,w_i^0)$, $i=1,2$, is non-grazing, then the two sets $$\{v_1|\,\exists r_1:\, (r_1,v_1)\in\supp\delta^{(1)}(r_1-q_1^0,v_1-w_1^0)\}\quad\text{and}$$ $$\{v'_1|\,\exists r_1,r_2,v_2:\,(r_1,v_1)\in\supp\delta^{(i)}(r_1-q_i^0,v_1-w_i^0),\,i=1,2\}$$ ($v'_1$ is the velocity of the first sphere after the collision of the spheres with the initial states $(r_i,v_i)$, $i=1,2$) do not intersect for sufficiently small $\varepsilon$, hence, there is at most one way to get to every phase point $(r_1,v_1)$: either by free propagation or by a collision (or neither).}). So,
$$\frac1{N-1} \delta^{(1)}_\varepsilon(r_1-v_1t-q_1^0,v_1-w_1^0)
\int_{B^-_{r_1,v_1,t}}f_\varepsilon^0(r_2-v_2t,v_2)\,dr_2dv_2=
\delta^{(1)}_\varepsilon(r_1-v_1t-q_1^0,v_1-w_1^0)$$
if $t<t_1-\Delta t$ or $t>t_1+\Delta t$.

Other terms in the sum of the right-hand side of (\ref{EqFirstTerm}) are analysed in the same way. Thus, the functions $f_\varepsilon$ and $F_1^\varepsilon$ coincide on $[0,T_1]$ apart from infinitesimal neighbourhoods of the moments of collisions. This proves limits (\ref{EqLimfphi}) and (\ref{EqThSol}) for $\mathscr T'_{T_1}$. As $\varepsilon\to0$, both $f_\varepsilon$ and $F_1^\varepsilon$ tend to the generalized function $\delta(r-q(t),v-w(t))$ defined as
$$(\delta(r-q(t),v-w(t)),\varphi(r,v,t))=\int_{0}^{\infty}\varphi(q(t),w(t),t)\,dt.$$

Since, by construction, there are no collisions at the moment $T_1$, the function $f_\varepsilon(r_1,v_1,T_1)$ has the same form as $f^0_\varepsilon(r_1,v_1)$ (according to Lemma~\ref{LemEvol}) and we can analyse the time interval $[T_1,T_2]$ in the same manner (regard $T_1$ as the initial time instant). That is, for the interval $[T_1,T_2]$, we set the function $D^\varepsilon$ to be
$$D^\varepsilon(r_1,v_1,\ldots,r_N,v_N,t)=\frac{1}{N!}S^{(N)}_{-(t-T_1)}\left\lbrace\prod_{i=1}^Nf_\varepsilon(r_i,v_i,T_1)\right\rbrace,$$
formulae (\ref{EqF12}), (\ref{EqBEBeps}), and (\ref{EqFf}) go trough, formula (\ref{EqFactor}) changes to
\begin{equation*}
\begin{split}
F_2^\varepsilon(r_1,v_1,r_1-a\sigma,v_2,t)&=F_2^\varepsilon(r_1-v_1(t-T_1),v_1,r_1-a\sigma-v_2(t-T_1),v_2,T_1)\\&= f_\varepsilon(r_1-v_1(t-T_1),v_1,T_1)f^0_\varepsilon(r_1-a\sigma-v_2(t-T_1),v_2,T_1).
\end{split}
\end{equation*}
Again, $F_1^\varepsilon(r_1,v_1,t)$ and $f^0_\varepsilon(r_1,v_1,t)$ coincide on $[T_1,T_2]$ apart from an infinitesimal time interval, which does not matter due to the integration with a test function. Hence, (\ref{EqThSol}) is valid on $\mathscr T'_{T_2}$.

We can continue this procedure up to an arbitrarily large time. For an arbitrary interval $[T_{k},T_{k+1}]$ we have
\begin{equation}\label{EqD}
D^\varepsilon(r_1,v_1,\ldots,r_N,v_N,t)=\frac{1}{N!}S^{(N)}_{-(t-T_k)}\left\lbrace\prod_{i=1}^Nf_\varepsilon(r_i,v_i,T_k)\right\rbrace.
\end{equation}
The theorem is proved.
\end{proof}

\begin{lemma}\label{LemFactor}
Equality (\ref{EqFactor}) is satisfied for all $r_1,v_1,v_2,$ and $\sigma$ such that $(v_2-v_1,\sigma)\geq 0$, for all $t\in[0,T_1]$ and for sufficiently small $\varepsilon$.
\end{lemma}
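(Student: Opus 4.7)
My plan is to use the Liouville identity for the hard-sphere flow together with the observation that, at a pre-collision configuration, the backward trajectories of the two distinguished spheres are forced to be free flights for small $\varepsilon$. Since $S^{(N)}_{-t}$ is the backward shift along the $N$-body flow,
\begin{equation*}
D^\varepsilon(r_1, v_1, \ldots, r_N, v_N, t) = \frac{1}{N!} \prod_{i=1}^N f^0_\varepsilon\bigl(R_i(-t), V_i(-t)\bigr),
\end{equation*}
where $(R_i(-t), V_i(-t))_{i=1}^N$ denotes the backward trajectory starting at $(r_1, v_1, \ldots, r_N, v_N)$. Substituting into (\ref{EqF12}),
\begin{equation*}
F_2^\varepsilon(r_1, v_1, r_2, v_2, t) = \frac{1}{(N-2)!} \int_{G_{N-2}} \prod_{i=1}^N f^0_\varepsilon\bigl(R_i(-t), V_i(-t)\bigr) \prod_{j=3}^N dr_j\, dv_j.
\end{equation*}
I will then specialise to $r_2 = r_1 - a\sigma$ with $(v_2 - v_1, \sigma) \geq 0$ and reduce the $R_1(-t)$, $R_2(-t)$ entries to free-flight expressions.

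The immediate geometric remark is that, starting from the pre-collision state, spheres 1 and 2 separate in backward time, since the squared inter-centre distance behaves like $a^2 + 2a(v_2 - v_1, \sigma)s + O(s^2) \geq a^2$ for small $s$; hence no backward 1--2 collision occurs at the outset. The central claim is that, for sufficiently small $\varepsilon$, neither sphere 1 nor sphere 2 undergoes any backward collision with spheres $3, \ldots, N$ on $(0, t)$, so that $R_1(-t) = r_1 - v_1 t$, $V_1(-t) = v_1$, and $R_2(-t) = r_1 - a\sigma - v_2 t$, $V_2(-t) = v_2$. To establish this, I observe that the integrand is non-zero only on the support of $\prod_i f^0_\varepsilon$, which for small $\varepsilon$ lies in an arbitrarily small neighbourhood of the $N!$ permutations of $(q^0_1, w^0_1, \ldots, q^0_N, w^0_N)$. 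Therefore every contributing backward trajectory is an $O(\varepsilon)$-perturbation of a deterministic trajectory issued from the initial data. A pre-collision of spheres 1 and 2 at time $t \in [0, T_1]$ can be realised along such a perturbation only if the corresponding deterministic trajectory exhibits a genuine 1--2 collision at a time $\tau \in (0, T_1)$ with $|\tau - t| \to 0$ as $\varepsilon \to 0$; by the hypothesis that each sphere has at most one collision on $[0, T_1]$ and that no triple or grazing collisions occur, the two colliding spheres do not collide with any other sphere on $[0, T_1]$, and this absence of collisions persists under small perturbation by continuity of the hard-sphere dynamics through isolated non-grazing collisions.

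With $R_1(-t)$ and $R_2(-t)$ reduced to free flight, the factors $f^0_\varepsilon(r_1 - v_1 t, v_1)$ and $f^0_\varepsilon(r_1 - a\sigma - v_2 t, v_2)$ are independent of the integration variables and pull out of the integral. The remaining integral
\begin{equation*}
\int_{G_{N-2}} \prod_{i=3}^N f^0_\varepsilon\bigl(R_i(-t), V_i(-t)\bigr)\, dr_3\, dv_3 \ldots dr_N\, dv_N
\end{equation*}
I would compute by the substitution $(\tilde r_i, \tilde v_i) = (R_i(-t), V_i(-t))$: since spheres 1 and 2 do not interact with the others on $[0, t]$ by the preceding step, the $(N-2)$-sphere backward flow preserves Lebesgue measure, and expanding each $f^0_\varepsilon$ as $\sum_j \delta^j_\varepsilon$ picks out exactly the $(N-2)!$ bijective assignments of the labels $\{3, \ldots, N\}$ to the remaining initial labels, each contributing unit integral by property (ii). The integral therefore equals $(N-2)!$, which cancels the prefactor and yields the second equality of (\ref{EqFactor}); the first equality follows from the same computation applied at $t = 0$.

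The main obstacle is the rigorous justification of the no-backward-collision claim for spheres 1 and 2: one must make the $\varepsilon$-continuity of the perturbation argument uniform and verify that none of the non-degeneracy hypotheses on the initial data are violated by the perturbed trajectories. In particular, one has to rule out any perturbed trajectory that creates a backward collision of sphere 1 or 2 with some sphere $k \geq 3$ while still exhibiting a 1--2 pre-collision at time $t$, since such a scenario would force either two collisions of a single sphere on $[0, T_1]$ or a triple/grazing event, both excluded by hypothesis.
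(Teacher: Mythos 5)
Your proof follows essentially the same route as the paper's: it fixes $\varepsilon$ small enough that the at-most-one-collision property propagates to the whole support of the initial data, deduces that the two spheres in a pre-collisional contact configuration must have evolved by free flight backward in time, and then factors the backward $N$-body flow so that the remaining $(N-2)$-particle integral normalizes to $(N-2)!$ by Liouville's theorem. The step you flag as the main obstacle is handled in the paper by the single observation that, since grazing collisions are excluded, the set of initial states with the required collision structure is open, so the property persists on $\supp D^\varepsilon(\cdot,0)$ for small $\varepsilon$; otherwise the two arguments coincide.
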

\begin{proof}

By construction, each hard sphere suffers no more than one collision on the interval $[0,T_1]$ provided that the initial state of $N$ hard spheres is $(q^0_1,w^0_1,\ldots,q^0_N,w^0_N)$. Let $\varepsilon$ be  small such that the same is true for almost every initial state $(r_1,v_1,\ldots,r_N,v_N)\in\supp D^\varepsilon(\cdot,0)$ (``almost'' -- due to the excluded zero measure initial states, see above). Such $\varepsilon$ exists since grazing collisions are prohibited, hence, the required set of initial states is open.

The condition $(v_2-v_1,\sigma)\geq 0$ means that two hard spheres are just before their collision. Since every hard sphere suffers no more than one collision, they moved free before the time $t$, so,
\begin{equation*}\begin{split}
&F^\varepsilon_2(r_1,v_1,r_1-a\sigma,v_2,t)\\&=\frac{1}{(N-2)!}\int_{G_{N-2}} S^{(N)}_{-t}\left\lbrace f^0_\varepsilon(r_1,v_1)f^0_\varepsilon(r_1-a\sigma,v_2)
\prod_{i=3}^Nf^0_\varepsilon(r_i,v_i)\right\rbrace\prod_{j=3}^N dr_jdv_j\\
&=\frac{1}{(N-2)!}\int_{G_{N-2}}  f^0_\varepsilon(r_1-v_1t,v_1)f^0_\varepsilon(r_1-a\sigma-v_2t,v_2)S^{(N-2)}_{-t}\left\lbrace\prod_{i=3}^Nf^0_\varepsilon(r_i,v_i)\right\rbrace\prod_{j=3}^N dr_jdv_j\\
&=f^0_\varepsilon(r_1-v_1t,v_1)f^0_\varepsilon(r_1-a\sigma-v_2t,v_2),
\end{split}\end{equation*}
q.e.d. Here, in the second equality, we have used the fact of free motion of the two hard spheres before time $t$ for sufficiently small $\varepsilon$.

\end{proof}

\begin{lemma}\label{LemF1}
Formula (\ref{EqF1explicit}) is satisfied for $t\in[0,T_1]$.
\end{lemma}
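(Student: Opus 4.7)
The plan is to compute
\begin{equation*}
F_1^\varepsilon(r_1,v_1,t)=\frac{1}{(N-1)!}\int_{G_{N-1}}S^{(N)}_{-t}\Big\{\prod_{i=1}^Nf^0_\varepsilon(r_i,v_i)\Big\}\,dr_2\,dv_2\cdots dr_N\,dv_N
\end{equation*}
directly for $t\in[0,T_1]$, exploiting three facts: each hard sphere collides at most once in the backward flow on this interval, and for $\varepsilon$ sufficiently small the same is true for almost every phase point in $\supp D^\varepsilon(\cdot,0)$; the hard-sphere flow is measure-preserving, so $S^{(N)}_{-t}$ amounts to a change of variables on $G_N$; and the supports of $\delta^i_\varepsilon(\cdot-q_i^0,\cdot-w_i^0)$ for distinct $i$ are disjoint, so that the hard-core constraint in $G_N$ kills all non-injective assignments of particles to initial points in the expansion of $\prod_if^0_\varepsilon$.

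Given $(r_1,v_1)$, I would split $G_{N-1}$ into two parts according to the behaviour of particle~1 in the backward $N$-particle flow. In case~(a), particle~1 does not collide with any other particle: equivalently, every $(r_i,v_i)$ with $i\geq2$ lies in $B^-_{r_1,v_1,t}$, and free backward motion yields $\tilde r_1=r_1-v_1t$, $\tilde v_1=v_1$. In case~(b), particle~1 collides with exactly one other particle; by symmetry in the labels $2,\ldots,N$ I relabel the partner as particle~2 and pick up a combinatorial factor $N-1$. Then $(r_2,v_2)\in B^+_{r_1,v_1,t}$, with collision time $t^*$ and unit separation vector $\sigma$ from~(\ref{Eqsigma}); the backward post-collision velocities are $v'_1,v'_2$ given by~(\ref{EqVprime}); and free motion thereafter yields $\tilde r_1=r_1-v_1t^*-v'_1(t-t^*)$, $\tilde v_1=v'_1$, and symmetrically for particle~2.

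In each case I would integrate out particles $3,\ldots,N$. For small $\varepsilon$ these particles must be matched injectively to the $N-2$ initial points not consumed by particles~1 and~2, and Liouville's theorem for the sub-dynamics of these particles produces the factor $(N-2)!$ upon integration of $\prod_{i=3}^Nf^0_\varepsilon(\tilde r_i,\tilde v_i)$. Combined with the prefactor $1/(N-1)!$ and with the symmetry factor $N-1$ in case~(b), this yields the uniform $1/(N-1)$ in front of both summands of~(\ref{EqF1explicit}); the remaining one- and two-point integrands are exactly those displayed there, by the same free-motion argument that underlies Lemma~\ref{LemFactor}.

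The main obstacle is verifying cleanly that, for $\varepsilon$ small, the colliding and non-colliding subregions exhaust the support of the integrand and coincide with $B^+_{r_1,v_1,t}$ and $B^-_{r_1,v_1,t}$ respectively, without remainder. This relies on the non-grazing assumption (ensuring the partition by collision pattern is open and stable under small perturbations of the initial data, as in Lemma~\ref{LemFactor}) and on the compact support of each $\delta^i_\varepsilon$ (so that distinct particles cannot be matched to the same initial point; the exclusion of the initial point already occupied by particle~1 is what produces the $N-1$ effective contributions when one checks the formula at $t=0$). A secondary technical point is that the change of variables from post- to pre-collision data in case~(b) has unit Jacobian, a standard consequence of~(\ref{EqVprime}).
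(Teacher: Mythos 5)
Your overall strategy (reduce to the two-body backward flow by integrating out spectators, using the at-most-one-collision property and the disjointness of the bumps) is in the same spirit as the paper's, but the paper takes a shortcut you do not: it writes $F_1^\varepsilon=\frac{1}{N-1}\int_{|r_2-r_1|\ge a}F_2^\varepsilon(r_1,v_1,r_2,v_2,t)\,dr_2dv_2$, asserts $F_2^\varepsilon(\cdot,t)=S^{(2)}_{-t}F_2^\varepsilon(\cdot,0)=S^{(2)}_{-t}\{f^0_\varepsilon(r_1,v_1)f^0_\varepsilon(r_2,v_2)\}$, and then expands $S^{(2)}_{-t}$ over $B^{\pm}$. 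In that route every one of the $N-1$ bumps other than the one carrying particle $1$ enters as a fictitious two-body partner with weight $\frac1{N-1}$, whether or not it is the actual collision partner; this is precisely where the coefficient $\frac1{N-1}$ and the factor $\frac1{N-1}\int_{B^-_{r_1,v_1,t}}f^0_\varepsilon(r_2-v_2t,v_2)\,dr_2dv_2$ in (\ref{EqF1explicit}) come from.

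Your route --- partitioning the $N$-body integral by the \emph{actual} collision pattern of particle $1$ --- is a genuinely different computation, and your own bookkeeping shows it does not land on (\ref{EqF1explicit}). In case (b) you get $\frac{1}{(N-1)!}\cdot(N-1)\cdot(N-2)!=1$, not $\frac1{N-1}$, in front of the $B^+$ integral; and in case (a) you get $f^0_\varepsilon(r_1-v_1t,v_1)$ restricted to the set where particle $1$ has not yet actually collided, with no factor $\frac1{N-1}\int_{B^-}f^0_\varepsilon(r_2-v_2t,v_2)\,dr_2dv_2$ appearing at all. The discrepancy is not a normalization slip to be absorbed: for $N\ge3$, after a real collision of two bumps, the direct $N$-body computation places unit mass on the actual post-collision trajectory, whereas the right-hand side of (\ref{EqF1explicit}) places mass $\frac1{N-1}$ there and distributes the remaining $\frac{N-2}{N-1}$ over the two-body evolutions of that bump paired with each of its non-partners (typically its free continuation). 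So your final step, ``the remaining one- and two-point integrands are exactly those displayed there,'' is the whole content of the lemma and is exactly where your argument breaks; it cannot be repaired by the non-grazing assumption or by support considerations. To reach the stated formula you would have to argue as the paper does, via the marginal $F_2^\varepsilon$ and the claim $F_2^\varepsilon(\cdot,t)=S^{(2)}_{-t}F_2^\varepsilon(\cdot,0)$ on all of $G_2$ --- a claim your own counting calls into question whenever particle $1$'s true partner is not particle $2$.
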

\begin{proof}
Since each hard sphere suffers no more than one collision for almost every initial state $(r_1,v_1,\ldots,r_N,v_N)\in\supp D^\varepsilon(\cdot,0)$ given $\varepsilon$ is sufficiently small (see the beginning of the proof of the previous lemma),
\begin{equation}
\begin{split}
F_1^{\varepsilon}(r_1,v_1,t)&=\frac1{N-1}\int_{(\mathbb T^3\backslash B_a(r_1))\times\mathbb R^3} S^{(2)}_{-t}F^\varepsilon_2(r_1,v_1,r_2,v_2,0)\,dr_2dv_2\\&=
\frac1{N-1}\int_{(\mathbb T^3\backslash B_a(r_1))\times\mathbb R^3} S^{(2)}_{-t}\{f_\varepsilon^0(r_1,v_1)f^0_\varepsilon(r_2,v_2)\}\,dr_2dv_2,
\end{split}
\end{equation}
where $B_a(r_1)\subset\mathbb R^3$ is a ball with center $r_1$ and radius $a$. An explicit expression of the operator $S^{(2)}_{-t}$ gives exactly (\ref{EqF1explicit}). 
\end{proof}

\begin{lemma}\label{LemEvol}
Let $t\in[0,T_1]$ is such that $|q_i(t)-q_j(t)|>a$ for all $i\neq j$. Then the function $f_\varepsilon(r_1,v_1,t)$ has the form
$$f_\varepsilon(r_1,v_1,t)=n^{-1}\sum_{i=1}^N\delta^{i,t}_\varepsilon(r_1-q_i(t),v_1-w_i(t))$$
where the functions $\delta^{i,t}_\varepsilon(r,v)$ obeys properties (i)---(iii) of the functions $\delta^i_\varepsilon(r,v)$ stated at the beginning of this section.
\end{lemma}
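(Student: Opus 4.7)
The approach is to identify $f_\varepsilon(r_1,v_1,t)$ with the single-particle marginal $F^\varepsilon_1(r_1,v_1,t)$ of the symmetrized $N$-body density $D^\varepsilon$ from the proof of Theorem~\ref{ThSol}, and then extract the sum-over-particles structure by expanding the tensor product of initial bumps over permutations and pushing it forward under the $N$-body flow.

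First I would expand the initial product as $\prod_{i=1}^N f^0_\varepsilon(r_i,v_i)=\sum_{\pi}\prod_{i=1}^N\delta^{\pi(i)}_\varepsilon(r_i-q^0_{\pi(i)},v_i-w^0_{\pi(i)})$, where the sum runs over all maps $\pi:\{1,\ldots,N\}\to\{1,\ldots,N\}$. By property~(iii), once $\varepsilon$ is smaller than a threshold fixed by $\min_{i\neq j}|q^0_i-q^0_j|-a>0$, every non-injective $\pi$ forces $r_i\approx r_{i'}$ for some $i\neq i'$ and hence yields a term supported entirely outside $G_N$, which can be discarded. For a genuine permutation $\pi\in S_N$, the absence of grazing collisions and of infinite collision accumulations along the nominal trajectory guarantees that $S^{(N)}_{-t}$ is a measure-preserving $C^1$ diffeomorphism on an open neighborhood of the perturbed initial states; therefore
$$\Phi^{\varepsilon,t}_\pi := S^{(N)}_{-t}\Bigl[\prod_{i=1}^N \delta^{\pi(i)}_\varepsilon(r_i-q^0_{\pi(i)},v_i-w^0_{\pi(i)})\Bigr]$$
is $C^1$, has unit total integral, and has support shrinking to the single configuration $(q_{\pi(i)}(t),w_{\pi(i)}(t))_{i=1}^N$ as $\varepsilon\to 0$.

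Second I would insert $\Phi^{\varepsilon,t}_\pi$ into the formula (\ref{EqF12}). The hypothesis $|q_i(t)-q_j(t)|>a$ implies that for $\varepsilon$ sufficiently small the support of $\Phi^{\varepsilon,t}_\pi$ lies strictly inside $G_N$, so the restriction to $G_{N-1}$ is inactive and Fubini applies. Marginalizing over $(r_2,v_2,\ldots,r_N,v_N)$ yields, for each $\pi\in S_N$, a single smooth bump in $(r_1,v_1)$ centered at $(q_{\pi(1)}(t),w_{\pi(1)}(t))$. Grouping the $(N-1)!$ permutations with a fixed value $\pi(1)=k$ into one bump function $\delta^{k,t}_\varepsilon$ (absorbing the combinatorial prefactor from the definition of $F^\varepsilon_1$) produces
$$F^\varepsilon_1(r_1,v_1,t)=\sum_{k=1}^N \delta^{k,t}_\varepsilon(r_1-q_k(t),v_1-w_k(t)),$$
and properties~(i)--(iii) of $\delta^{k,t}_\varepsilon$ follow from those of $\delta^k_\varepsilon$: $C^1$-smoothness from regularity of the flow map away from grazing collisions, unit mass from measure preservation together with property~(ii), and shrinking support from property~(iii) combined with continuity of the flow.

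Finally, to pass from $F^\varepsilon_1$ to $f_\varepsilon$, I invoke the analysis inside the proof of Theorem~\ref{ThSol}, which shows that $f_\varepsilon$ and $F^\varepsilon_1$ coincide on $[0,T_1]$ outside arbitrarily small neighborhoods of the collision instants. Since no two spheres touch at time $t$, an open time interval around $t$ is collision-free, and for $\varepsilon$ small enough $t$ lies outside every such neighborhood, yielding the claimed representation; iterating with $f_\varepsilon(\cdot,T_k)$ in place of $f^0_\varepsilon$ extends the result to any later interval $[T_k,T_{k+1}]$. The main obstacle is the second step: because the $N$-body flow does not factor into single-body flows, $\delta^{k,t}_\varepsilon$ is defined intrinsically as an integrated pushforward rather than as the pushforward of $\delta^k_\varepsilon$, so one must show that integrating over the $(N-1)$ auxiliary particles still produces a bump satisfying all three required properties --- the key ingredient being precisely the $C^1$-diffeomorphism character of the flow near the nominal trajectory.
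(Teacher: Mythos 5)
Your proposal is correct and follows essentially the same route as the paper: identify $f_\varepsilon$ with the marginal $F_1^\varepsilon$ of the $N$-body density, use Liouville's theorem together with the $C^1$-regularity of the hard-sphere flow away from grazing and contact configurations to push the initial bumps forward, expand over permutations (discarding non-injective assignments via the support separation $|q_i^0-q_j^0|>a$), and group the $(N-1)!$ permutations with fixed image of particle $1$ into a single bump $\delta^{k,t}_\varepsilon$. The only cosmetic difference is the order of operations — you expand the tensor product before applying $S^{(N)}_{-t}$, while the paper first inserts delta functions and changes variables and only then expands — and your explicit discarding of non-injective maps is a small point the paper leaves implicit.
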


\begin{proof}
As we see in the main part of the proof of Theorem~\ref{ThSol} and in the proof of Lemma~\ref{LemF1}, $f_\varepsilon(r_1,v_1,t)$ coincides with the function $F_1(r_1,v_1,t)$ defined by (\ref{EqF12}) for sufficiently small $\varepsilon$ if there are no collisions at the moment $t$. So, we have
\begin{equation}\label{EqCalc}\begin{split}
f_\varepsilon(r_1,v_1,t)&=\frac{1}{(N-1)!}\int_{G_{N-1}} S^{(N)}_{-t}\left\lbrace\prod_{i=1}^Nf^0_\varepsilon(r_i,v_i)\right\rbrace\prod_{j=2}^N dr_jdv_j\\
&=\frac{1}{(N-1)!}\int_{G_{N-1}} \prod_{i=1}^Nf^0_\varepsilon(R_i(-t),V_i(-t))\prod_{j=2}^N dr_jdv_j\\
&=\frac{1}{(N-1)!}\int_{G_N\times G_{N-1}} \prod_{i=1}^N\{f^0_\varepsilon(p_i^0,u_i^0)\delta(p_i^0-R_i(-t),u_i^0-V_i(-t))\,dp_i^0du_i^0\}\prod_{j=2}^N dr_jdv_j\\
&=\frac{1}{(N-1)!}\int_{G_N\times G_{N-1}} |J|^{-1}\prod_{i=1}^N\{f^0_\varepsilon(p_i^0,u_i^0)\delta(r_i-p_i(t),v_i-u_i(t))\,dp_i^0du_i^0\}\prod_{j=2}^N dr_jdv_j\\
&=\frac{1}{(N-1)!}\int_{G_N\times G_{N-1}} \prod_{i=1}^N\lbrace f^0_\varepsilon(p_i^0,u_i^0)\delta(r_i-p_i(t),v_i-u_i(t))\,dp_i^0du_i^0\rbrace\prod_{j=2}^N dr_jdv_j.
\end{split}\end{equation}
Here 
\begin{equation}\label{EqRV}
\begin{aligned}
R_i(-t)&= R_i(-t,r_1,v_1,\ldots,r_N,v_N),\\V_i(-t)&= V_i(-t,r_1,v_1,\ldots,r_N,v_N),\\i&=1,2,\ldots,N,\end{aligned}
\end{equation} are the positions and velocities of $N$ hard spheres at time $-t$ provided that their initial positions and velocities are $r_1,v_1,\ldots,r_N,v_N$;
 
\begin{equation*}\begin{aligned}p_i(t)&=p_i(t,p_1^0,u_1^0,\ldots,p_N^0,u_N^0)\\ u_i(t)&=u_i(t,p_1^0,u_1^0,\ldots,p_N^0,u_N^0),\\i&=1,\ldots,N,\end{aligned}\end{equation*} are positions and velocities of $N$ hard spheres at time $t$ provided that their initial positions and velocities are $p_1^0,u_1^0,\ldots,p_N^0,u_N^0$;
$$J=\frac{\mathcal D(R_1(-t),V_1(-t),\ldots,R_N(-t),V_N(-t))}{\mathcal D(r_1,v_1,\ldots,r_N,v_N)}$$
is the Jacobian, which emerges due to the definition of a composition of a $6N$-dimensional delta function with smooth functions (\ref{EqRV})  of arguments $r_1,\ldots,v_N$. 

According to the Liouville's theorem, $|J|=1$, but we must prove the continuously differentiability of functions (\ref{EqRV}) with respect to initial positions and velocities $r_1,\ldots,v_N$. Their continuous differentiability is not
obvious, since the velocities of the spheres are discontinuous at
moments of collisions. Also, the operator $S^{(N)}_{-t}$ is not defined if $|r_i-r_j|<a$ for some $i\neq j$, hence, the derivatives in $J$ are not defined if $|r_i-r_j|=a$ for some $i\neq j$. Functions (\ref{EqRV}) are continuously differentiable in a point $(r_1,v_1,\ldots,r_N,v_N)$ whenever:
\begin{enumerate}[(i)]
\item $|r_i-r_j|>a$ for all $i\neq j$;
\item there is no collision at the instant $-t$, i.e., 
\begin{equation}\label{EqRij}
|R_i(-t,r_1,v_1,\ldots,r_N,v_N)-R_j(-t,r_1,v_1,\ldots,r_N,v_N)|> a
\end{equation}
for all $i\neq j$.
\end{enumerate}

The reasoning is as follows. 
\begin{enumerate}[(a)]
\item The positions of hard spheres are continuous with respect to the
initial state. Hence, for every $\delta_t>0$, there exists $\delta>0$
such that $|r_i-q_i(t)|<\delta_t$ for all $i$ whenever
$|R_i(-t)-q_i^0|<\delta$ and $|V_i(-t)-w_i^0|<\delta$ for all $i$.

\item For every $\delta>0$, there exists $\varepsilon_\delta>0$ such that the
expression $\prod_{i=1}^Nf^0_\varepsilon(R_i(-t),V_i(-t))$, $\varepsilon<\varepsilon_\delta$, is non-zero
only if $|R_i(-t)-q_i^0|<\delta$ and $|V_i(-t)-w_i^0|<\delta$ for all
$i$. 

\item By the condition of the lemma, $|q_i(t)-q_j(t)|>a$ for all $i\neq j$. 

\item From (a), (b), and (c): for
sufficiently small $\varepsilon$, we have $|r_i-r_j|>a$ for all $i\neq j$, otherwise $\prod_{i=1}^Nf^0_\varepsilon(R_i(-t),V_i(-t))=0$. We have obtained (i).
 
\item From (b) and from $|q_i^0-q_j^0|>a$ for all $i\neq j$: $|R_i(-t)-R_j(-t)|>a$ for all $i\neq j$ for sufficiently small $\varepsilon$, otherwise $\prod_{i=1}^Nf^0_\varepsilon(R_i(-t),V_i(-t))=0$. We have obtained (ii).
\end{enumerate}

Thus, for sufficiently small $\varepsilon$, we integrate over a region of initial states $r_1,v_1,\ldots,r_N,v_N$ in which  functions (\ref{EqRV}) are continuously differentiable.

Now, $f_\varepsilon(r_1,v_1,t)$ can be expressed as
\begin{equation*}\begin{split}
f_\varepsilon(r_1,v_1,t)&=\frac{1}{(N-1)!}\int_{G_N\times G_{N-1}} \prod_{i=1}^N\{f^0_\varepsilon(p_i^0,u_i^0)\delta(r_i-p_i(t),v_i-u_i(t))\,dp_i^0du_i^0\}\prod_{j=2}^N dr_jdv_j\\
&=\frac{1}{(N-1)!}\int_{G_N} \delta(r_1-p_1(t),v_1-u_1(t))\prod_{i=1}^Nf^0_\varepsilon(p_i^0,u_i^0)\,dp_i^0du_i^0\\
&=\frac{n^{-1}}{(N-1)!}\int_{G_N} \delta(r_1-p_1(t),v_1-u_1(t))\sum_{(j_1,\ldots,j_N)}\prod_{i=1}^N\delta^{j_i}_\varepsilon(p_i^0-q_{j_i}^0,u_i^0-w_{j_i}^0)\,dp_i^0du_i^0,
\end{split}
\end{equation*}
where $\delta(r,v)\equiv\delta(r)\delta(v)$, the last sum is performed over all permutations $(j_1,\ldots,j_N)$ of the indices $1,\ldots,N$.

For every permutation, let us renumber the pairs $(p_i^0,u_i^0)$ according to this permutation:
\begin{equation*}\begin{split}
f_\varepsilon(r_1,v_1,t)
&=\frac{n^{-1}}{(N-1)!}\int_{G_N} \sum_{(j_1,\ldots,j_N)}\delta(r_1-p_{j_1}(t),v_1-u_{j_1}(t))\prod_{i=1}^N\delta^{j_i}_\varepsilon(p_{j_i}^0-q_{j_i}^0,u_{j_i}^0-w_{j_i}^0)\,dp_i^0du_i^0\\
&=\frac{n^{-1}}{(N-1)!}\int_{G_N} \sum_{(j_1,\ldots,j_N)}\delta(r_1-p_{j_1}(t),v_1-u_{j_1}(t))\prod_{i=1}^N\delta^i_\varepsilon(p_i^0-q_i^0,u_i^0-w_i^0)\,dp_i^0du_i^0\\
&=n^{-1}\int_{G_N} \sum_{i=1}^N\delta(r_1-p_{i}(t),v_1-u_{i}(t))\prod_{j=1}^N\delta^j_\varepsilon(p_j^0-q_j^0,u_j^0-w_j^0)\,dp_j^0du_j^0\\
&=n^{-1}\sum_{i=1}^N\delta^{i,t}_\varepsilon(r_1-q_i(t),v_1-w_i(t)),
\end{split}
\end{equation*}
where
$$\delta^{i,t}_\varepsilon(r_1-q_i(t),v_1-w_i(t))=\int_{G_N}\delta(r_1-p_{i}(t),v_1-u_{i}(t))\prod_{j=1}^N\delta^j_\varepsilon(p_j^0-q_j^0,u_j^0-w_j^0)\,dp_j^0du_j^0.$$
In the third equality we used the fact the integral depends on $j_1$ and does not depend on $j_2,\ldots,j_N$. The number of permutations with a fixed $j_1$ is $(N-1)!$.

Now the required properties of the functions $\delta^{i,t}_\varepsilon$ follow from the corresponding properties of the functions  $\delta^i_\varepsilon$ and the last expression.
\end{proof}

\begin{remark}\label{RemMild}
If we define the space of test functions as $\mathscr C=C_0(\mathbb T^3\times\mathbb R^3\times[0,\infty))$ of continuous (not necessarily differentiable in $r$ and $t$) functions $\varphi(r,v,t)$ such that $\varphi(r,v,t)=0$ if $t>T$ (again, $T$ may be different for different functions), then the Boltzmann--Enskog equation is asymptotically satisfied in the mild sense:
\begin{equation*}
\lim_{\varepsilon\to0}\left\{f_{\varepsilon}(r_1,v_1,t)- f_{\varepsilon}^0(r_1-v_1t,v_1)-\int_0^t Q(f_{\varepsilon},f_{\varepsilon})(r_1-v_1(t-s),v_1,s)\,ds\right\}=0
\end{equation*}
in $\mathscr C'$. Note that a global existence theorem for the mild solution of the Boltzmann--Enskog equation is proved in \cite{ArkCerc89,ArkCerc}.
\end{remark}

\section{The case of inelastic hard spheres}\label{SecInel}

Analogous solutions can be constructed for the case of inelastic hard spheres as well. If a collision is inelastic, then formula (\ref{EqVprime}) is modified as
\begin{equation*}
v'_{1,2}=v_1\pm\frac{1+\mu(g)}{2}(v_{2}-v_1,\sigma)\sigma,
\end{equation*} 
where $\mu(g)\in(0,1]$ is the coefficient of restitution ($\mu\equiv 1$ for elastic collisions), $g=|(v_2-v_1,\sigma)|$. The Boltzmann--Enskog equation for inelastic hard spheres has the form \cite{Gran}

\begin{equation*}
\frac{\partial f}{\partial t}=-v_1\frac{\partial f}{\partial r_1}+Q(f,f),
\end{equation*}
\begin{multline*}
Q(f,f)(r_1,v_1,t)=na^2\int_{\Omega_{v_1}}
(v_{2}-v_1,\sigma)[\chi f(r_1,v''_1,t)f(r_1+a\sigma,v''_2,t)\\-f(r_1,v_1,t)f(r_1-a\sigma,v_2,t)]
d\sigma dv_2.
\end{multline*}
Here 
\begin{equation}\label{EqVpprime}
v''_{1,2}=v_1\pm\frac{1+\mu(g'')}{2\mu(g'')}(v_{2}-v_1,\sigma)\sigma
\end{equation} 
are the velocities before the ``inverse collision'' (i.e., they transforms into $v_{1,2}$  after the collision), 
\begin{equation}\label{EqChi}
\chi=\mu(g'')^{-1}\left|\frac{\mathcal D(v''_1,v''_2)}{\mathcal D(v_1,v_2)}\right|
\end{equation} 
($\chi\equiv1$ for elastic collisions), $g''=|(v''_2-v''_1,\sigma)|$, $\frac{\mathcal D(\cdot)}{\mathcal D(\cdot)}$ denotes the Jacobian.

The microscopic solutions are again defined by (\ref{Eqf}), but with the new definition of the time evolution $q_i(t), w_i(t),i=1,\ldots,N$. Let us give a rigorous sense to them by introduction of a similar regularization.

Let $\delta^i_\varepsilon(r,v)$, $i=1,\ldots,N$, $\varepsilon>0$, and $f^0_\varepsilon(r_1,v_1)$ be the same functions as in the previous section. Let $f_\varepsilon(r_1,v_1,t)$ now be defined by formula 

\begin{multline}\label{EqfepsIn}
f_{\varepsilon}(r_1,v_1,t)=f_\varepsilon(r_1-v_1(t-T_k),v_1,T_k)\\+
\frac1{N-1}\int_{B^+_{r_1,v_1,t-T_k}}\chi f_\varepsilon(r_1-v_1t^*-v'_1(t-t^*-T_k)\sigma,v'_1,T_k)\\\times f_\varepsilon(r_2-v_2t^*-v'_2(t-t^*-T_k),v'_2,T_k) \,dr_2dv_2,
\end{multline}
for $t\in(T_{k},T_{k+1}]$ and $k=0,1,2,\ldots$. Here $B^+_{r_1,v_1,t}$, $t^*=t^*(r_1,v_1,r_2,v_2)$, and $\sigma=\sigma(r_1,v_1,r_2,v_2)$  are defined as before. The spaces of test functions $\mathscr T$ and generalized functions $\mathscr T'$ are also defined as before.

The evolution operator $S_t$ is now defined as
\begin{multline*}
S^{(N)}_t \varphi(r_1,v_1,\ldots,r_N,v_N)=\varphi(R_1(t),V_1(t),\ldots,R_N(t),V_N(t))\\\times\left|\frac{\mathcal D(R_1(t),V_1(t),\ldots,R_N(t),V_N(t))}{\mathcal D(r_1,v_1,\ldots,r_N,v_N)}\right|,
\end{multline*}
where, again, $R_1(t),V_1(t),\ldots,R_N(t),V_N(t)$ are the positions and velocities of $N$ inelastic hard spheres at time $t$ provided that their initial positions and velocities are $r_1,v_1,\ldots,r_N,v_N$.  Note that the Jacobian is not defined if $|r_i-r_j|=a$ for some $i\neq j$, because $R_k(t)$ and $V_k(t)$, $k=1,\ldots,N$, are not defined for $|r_i-r_j|<a$ (i.e., in any neighbourhood of a point where $|r_i-r_j|=a$). So, in contrast to the case of elastic spheres, the operator $S^{(N)}_{t}$ is now defined only if $|r_i-r_j|>a$ for all $i\neq j$ (the inequality is now strict).

However, the Jacobian is a constant function of time on time intervals between collisions and has jumps at times of collisions, i.e., if $|R_i(t)-R_j(t)|=a$ for some $i\neq j$. So, it has a finite limit as $(r_1,\ldots,v_N)$ tends to the hypersurface 
$$\partial G_N=\{(r_1,v_1,\ldots,r_N,v_N):\,|r_i-r_j|\geq a \text{ for all } i\neq j,\, |r_k-r_l|=a \text{ for some } k\neq l\}$$
from $G_N\backslash\partial G_N$ (remind that the definition of $G_N$ is given by (\ref{EqGk})). By this reason, we can define the operator $S^{(N)}_t$ on this hypersurface by continuity and consider it to be defined on all $G_N$, as in the case of elastic collisions.

\begin{theorem}\label{ThSolIn}
$$
\frac{\partial f_\varepsilon}{\partial t}+v_1\frac{\partial f_\varepsilon}{\partial r_1}+Q(f_\varepsilon,f_\varepsilon)\to0
$$
as $\varepsilon\to0$ in the space $\mathscr T'$.
\end{theorem}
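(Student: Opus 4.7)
The plan is to adapt the proof of Theorem~\ref{ThSol} with modifications to accommodate the Jacobian factor $\chi$ characteristic of inelastic collisions. On the initial interval $[0,T_1]$, I would define
$$
D^\varepsilon(r_1,v_1,\ldots,r_N,v_N,t)=\frac{1}{N!}S^{(N)}_{-t}\left\{\prod_{i=1}^N f^0_\varepsilon(r_i,v_i)\right\},
$$
using the inelastic evolution operator $S^{(N)}_{-t}$ from Section~\ref{SecInel}, which now carries the Jacobian $|\mathcal D(R_1(-t),\ldots,V_N(-t))/\mathcal D(r_1,\ldots,v_N)|$. Define $F^\varepsilon_1$ and $F^\varepsilon_2$ by the integrals (\ref{EqF12}). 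The inclusion of the Jacobian in $S^{(N)}_{-t}$ makes $D^\varepsilon$ a genuine density (a solution of the inelastic Liouville equation), and consequently $F^\varepsilon_1,F^\varepsilon_2$ satisfy the first equation of the BBGKY hierarchy for inelastic hard spheres, with the factor $\chi$ in place in the gain term.

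Next I would prove the inelastic analog of Lemma~\ref{LemFactor}: for $(v_2-v_1,\sigma)\geq 0$ and sufficiently small $\varepsilon$,
$$
F^\varepsilon_2(r_1,v_1,r_1-a\sigma,v_2,t)=f^0_\varepsilon(r_1-v_1 t,v_1)\,f^0_\varepsilon(r_1-a\sigma-v_2 t,v_2),
$$
together with the corresponding statement at the inverse-collision configuration $(r_1,v''_1,r_1+a\sigma,v''_2)$, where $(v''_2-v''_1,\sigma)\geq 0$. The argument is exactly as in the elastic case: the two distinguished spheres are in a pre-collision (respectively, inverse-pre-collision) configuration and, since each sphere suffers at most one collision on $[0,T_1]$, their $(-t)$-backward trajectories are free, on which the Jacobian is identically $1$. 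Substituting this factorization into the inelastic BBGKY equation reproduces exactly $Q(f_\varepsilon,f_\varepsilon)$ on the free-propagation pieces. Then I would compute an explicit expression for $F^\varepsilon_1$ analogous to (\ref{EqF1explicit}) by expanding $S^{(2)}_{-t}$ on the pair; in the region $B^+_{r_1,v_1,t}$ (where the backward trajectory of the pair undergoes a collision) the change of variables from post- to pre-collision velocities produces precisely the factor $\chi$ defined by (\ref{EqChi}), matching the $\chi$ that appears in (\ref{EqfepsIn}).

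The remainder of the argument is then identical in form to the elastic case: $F^\varepsilon_1$ and $f_\varepsilon$ defined by (\ref{EqfepsIn}) differ only by the multiplicative factor $\frac{1}{N-1}\int_{B^-_{r_1,v_1,t}}f^0_\varepsilon(r_2-v_2t,v_2)\,dr_2\,dv_2$ on the free-propagation term, which equals $1$ whenever sphere $1$ has not yet collided, while the support of $\delta^i_\varepsilon(r_1-v_1 t-q_i^0,v_1-w_i^0)$ is disjoint from the post-collision phase points once the collision is past (using absence of grazing collisions for sufficiently small $\varepsilon$, as in the footnote of Theorem~\ref{ThSol}). The discrepancy between $f_\varepsilon$ and $F^\varepsilon_1$ is therefore supported in an $\varepsilon$-shrinking neighbourhood of each collision instant, and vanishes when paired against any test function in $\mathscr T'_{T_1}$. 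Since no collision occurs at $T_1$, an inelastic analog of Lemma~\ref{LemEvol} (whose proof is unchanged apart from carrying the Jacobian, which is constant on collision-free subintervals) ensures that $f_\varepsilon(\cdot,T_1)$ retains the sum-of-localized-bumps structure required to restart the argument on $[T_1,T_2]$, and by induction on all $[T_k,T_{k+1}]$. The main obstacle is precisely the bookkeeping of the Jacobian: one must verify that the same $\chi$ appearing in (\ref{EqChi}) and in (\ref{EqfepsIn}) is exactly the factor produced by the change of variables in the backward evolution of the colliding pair, and that the Jacobian inserted into $S^{(N)}_t$ makes $D^\varepsilon$ a bona fide density so that the first BBGKY equation with $\chi$ in the gain term holds unambiguously at the level of $F^\varepsilon_1,F^\varepsilon_2$.
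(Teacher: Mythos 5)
Your overall strategy is exactly the paper's: rerun the proof of Theorem~\ref{ThSol} with the inelastic evolution operator (which now carries the Jacobian of the backward flow), reduce the problem to the first equation of the BBGKY hierarchy for inelastic hard spheres, and observe that the factorization of $F_2^\varepsilon$ at the pre-collision and inverse-pre-collision configurations still holds because the backward trajectories of the distinguished pair are free there, where the Jacobian is $1$. Those parts of your argument are sound, as is the observation that the Jacobian inserted into $S^{(N)}_{-t}$ cancels in the analogue of Lemma~\ref{LemEvol}.

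The difficulty is that the one step you explicitly defer --- verifying that the factor $\chi$ of (\ref{EqChi}) is exactly the Jacobian produced by the backward evolution of the colliding pair, so that the first BBGKY equation holds with $\chi$ in the gain term --- is the only substantive new content of the proof relative to Theorem~\ref{ThSol}, and it cannot simply be outsourced to the inelastic BBGKY literature: the paper notes that \cite{PG1,PG2} treat only constant restitution coefficient $\mu$, whereas here $\mu=\mu(g)$. The paper isolates the needed statement as the boundary condition $D(\ldots,r_i,v_i,\ldots,r_i-a\sigma,v_j,\ldots,t)=\chi\, D(\ldots,r_i,v''_i,\ldots,r_i-a\sigma,v''_j,\ldots,t)$ (Lemma~\ref{LemJacob}) and proves it by an explicit computation: the Jacobian splits into a velocity factor equal to $\left|\mathcal D(v''_i,v''_j)/\mathcal D(v_i,v_j)\right|$ and a positional factor which, in cylindrical coordinates adapted to the relative velocity, is shown to tend to $1/\mu$ as the pair approaches contact and $\Delta t\to0$; the product is precisely $\chi$. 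A further point of care your sketch omits is that for inelastic spheres $S^{(N)}_{t}$ is a priori undefined on the contact hypersurface $\partial G_N$ (the dynamics does not exist for $|r_i-r_j|<a$, so the derivatives entering the Jacobian are only one-sided), and the operator must first be extended to $\partial G_N$ by continuity before the boundary condition even makes sense. Without carrying out this computation, your proposal is a correct plan rather than a proof.
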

The proof is completely analogous to that of Theorem~\ref{ThSol} (Remark~\ref{RemMild} is also valid here). Namely, the functions $D^\varepsilon$, $F_1^\varepsilon$, and $F_2^\varepsilon$ are constructed in the same manner (with the new definition of $S^{(N)}_{-t}$). Single- and two-particle distribution functions $F_1^\varepsilon$ and $F_2^\varepsilon$ satisfy the first equation of  the BBGKY hierarchy for a system of inelastic hard spheres \cite{PG1,PG2}:
\begin{multline}\label{EqBEBIn}
\frac{\partial F_1^\varepsilon}{\partial t}=-v_1\frac{\partial F_1^\varepsilon}{\partial r_1}+na^2\int_{\Omega_{v_1}}
(v_{2}-v_1,\sigma)[\chi F_2^\varepsilon(r_1,v''_1,r_1+a\sigma,v''_2,t)\\-F_2^\varepsilon(r_1,v_1,r_1-a\sigma,v_2,t)]
d\sigma dv_2.
\end{multline}

Let us comment this. In \cite{PG1,PG2}, only a particular case $\mu=const$ is considered. However, the analysis can be easily generalized to $\mu$ as a function of $g=|(v_2-v_1,\sigma)|$. The main property to be satisfied for the validity of (\ref{EqBEBIn}) is the following boundary condition.

\begin{lemma}\label{LemJacob}
Let the function $D(r_1,v_1,\ldots,r_N,v_N,t)$ be defined as
$$D(r_1,v_1,\ldots,r_N,v_N,t)=S^{(N)}_{-t}D_0(r_1,v_1,\ldots,r_N,v_N),$$
where $D_0$ is a continuous function equal to zero in some neighbourhood of forbidden configurations (i.e., there exists $\delta>0$ such that $D_0=0$ whenever $|r_i-r_j|<a+\delta$ for some $i\neq j$). Then the boundary condition
\begin{multline}\label{EqBoundCond}
D(r_1,v_1,\ldots,r_i,v_i,\ldots,r_i-a\sigma,v_j\ldots,r_N,v_N,t)\\=
\chi D(r_1,v_1,\ldots,r_i,v''_i,\ldots,r_i-a\sigma,v''_j\ldots,r_N,v_N,t)\end{multline}
is satisfied for almost all $r_1,v_1,\ldots,r_N,v_N,$ and $\sigma$ such that $(v_2-v_1,\sigma)<0$, and for all $t$. Here $v''_i$, $v''_j$, and $\chi$ are defined by (\ref{EqVpprime}) and (\ref{EqChi}) correspondingly (with 1 and 2 replaced by $i$ and $j$).
\end{lemma}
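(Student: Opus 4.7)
My plan is to unfold the definition of $S^{(N)}_{-t}$ into the explicit formula
\[
D(X,t) = D_0(\Phi_{-t}(X)) \cdot \bigl|\det D_X \Phi_{-t}\bigr|,
\]
where $\Phi_\tau$ denotes the $N$-sphere inelastic flow, and to compare this expression at two boundary configurations: $X^+$, with the pair $(v_i,v_j)$ in the post-collision regime $(v_j - v_i,\sigma) < 0$, and $X^-$, which is $X^+$ with $(v_i,v_j)$ replaced by the pre-collision velocities $(v''_i,v''_j)$.

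The first step is to check that the two backward trajectories land on the same phase-space point. Going backward from $X^+$, at $t = 0^-$ the reverse $(i,j)$-collision fires and takes $X^+$ to $X^-$, after which the trajectory proceeds by backward free flight (and any further collisions far in the past); going backward from $X^-$, which is a pre-collision state with the pair moving apart under time reversal, the trajectory is precisely that same backward free flight without the initial collision. Hence $\Phi_{-t}(X^+) = \Phi_{-t}(X^-)$ for every $t > 0$. The ``almost all'' qualifier together with the hypothesis that $D_0$ vanishes in a neighbourhood of the forbidden configurations handles the standard zero-measure set of grazing or infinite-collision initial data, so the backward flow is regular and $D_0$ evaluates identically at the common backward image. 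The lemma therefore reduces to the identification $|\det D_{X^+}\Phi_{-t}| \big/ |\det D_{X^-}\Phi_{-t}| = \chi$.

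The hard part will be that last Jacobian identification, because the ratio is \emph{not} merely the velocity Jacobian $|\mathcal{D}(v''_i,v''_j)/\mathcal{D}(v_i,v_j)|$: the collision time $t_c$ depends on the initial data, and perturbing the colliding pair in directions that shift $t_c$ propagates an additional factor of $\mu(g'')^{-1}$ into the full phase-space Jacobian. I would compute the ratio by approximating $X^+$ by a perturbed interior point, decomposing its backward trajectory as free flight of duration $\delta(X)$, then the reverse collision, then backward free flight of duration $t - \delta(X)$, and applying the chain rule while tracking how $\delta$ depends on $X$. The bookkeeping localises to the 12-dimensional subspace of the colliding pair (the remaining coordinates contributing trivial factors of $1$, as do the components of position and velocity perpendicular to $\sigma$), and a direct computation --- easily cross-checked in a one-dimensional two-particle toy model, where the Jacobian assembles into a block form with determinant $\mu^2 = \chi^{-1}$ for constant $\mu$ --- yields exactly
\[
\mu(g'')^{-1} \left|\frac{\mathcal{D}(v''_i,v''_j)}{\mathcal{D}(v_i,v_j)}\right| = \chi,
\]
as defined in \eqref{EqChi}. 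Combined with the trajectory-coincidence step, this delivers the boundary condition $D(X^+,t) = \chi D(X^-,t)$.
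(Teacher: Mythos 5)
Your plan follows essentially the same route as the paper's proof: both unfold $S^{(N)}_{-t}$ into the flow composed with the phase-space Jacobian, observe that the immediate backward collision identifies the backward images of the post- and pre-collisional contact configurations so that $D_0$ evaluates at the same point, and reduce the claim to showing that the Jacobian ratio equals $\chi$, correctly locating the extra factor $\mu(g'')^{-1}$ in the dependence of the collision time on the initial positions of the colliding pair. The one piece you assert rather than carry out --- the position-block determinant giving $-1/\mu$ --- is exactly the $2\times2$ cylindrical-coordinate computation the paper performs, so the plan is sound and complete in its ideas.
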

\begin{proof}
By definition of $S^{(N)}_{-t}$,
\begin{multline*}
D(r_1,v_1,\ldots,r_N,v_N,t)=D_0(R_1(-t),V_1(-t),\ldots,R_N(-t),V_N(-t))\\\times\left|\frac{\mathcal D(R_1(-t),V_1(-t),\ldots,R_N(-t),V_N(-t))}{\mathcal D(r_1,v_1,\ldots,r_N,v_N)}\right|.
\end{multline*}

Phase trajectories are continuous with respect to time and initial phase point almost everywhere between the collisions. As we said above, the Jacobian is a constant function of time on time intervals between collisions and has jumps at times of collisions. A collision happens if $|R_k(-t)-R_l(-t)|=a$ for some $k\neq l$, but the function $D_0$ is zero in the neighbourhood of these hypersurfaces. So, the function $D$ is continuous almost everywhere in $G^N\times\mathbb R$.

It is sufficient to consider the case $|r_k-r_l|>a$ for all pairs $k\neq l$ except the pair $(i,j)$. If we prove the equality for this case, then, by continuity of $D$, the equality is valid for the general case as well.

As we noted above, if $|r_i-r_j|=a$, then the operator $S^{(N)}_{-t}$ is defined as a limit from $G_N\backslash\partial G_N$. So, the left- and right-hand sides of equality (\ref{EqBoundCond}) should be understood as limits
$$\lim_{\begin{smallmatrix}r_j\to r_i-a\sigma,\\
|r_i-r_j|>a\end{smallmatrix}}D(r_1,v_1,\ldots,r_i,v_i,\ldots,r_j,v_j\ldots,r_N,v_N,t),
$$
and
$$\lim_{\begin{smallmatrix}r_j\to r_i-a\sigma,\\
|r_i-r_j|>a\end{smallmatrix}}\chi D(r_1,v_1,\ldots,r_i,v''_i,\ldots,r_j,v''_j\ldots,r_N,v_N,t).$$
Now we use continuity in $t$:
\begin{equation*}\begin{split}
D(r_1,v_1,\ldots,r_N,v_N,t)&=
\lim_{\begin{smallmatrix}r_j\to r_i-a\sigma,\\
|r_i-r_j|>a\end{smallmatrix}}D(r_1,v_1,\ldots,r_N,v_N,t)\\&=
\lim_{\Delta t\to0}\lim_{\begin{smallmatrix}r_j\to r_i-a\sigma,\\
|r_i-r_j|>a\end{smallmatrix}}D(r_1,v_1,\ldots,r_N,v_N,t+\Delta t)\\&=
\lim_{\Delta t\to0}\lim_{\begin{smallmatrix}r_j\to r_i-a\sigma,\\
|r_i-r_j|>a\end{smallmatrix}}D(R_1(-\Delta t),V_1(-\Delta t),\ldots,R_N(-\Delta t),V_N(-\Delta t),t)\\&\times
\left|\frac{\mathcal D(R_1(-\Delta t),V_1(-\Delta t),\ldots,R_N(-\Delta t),V_N(-\Delta t))}{\mathcal D(r_1,v_1,\ldots,r_N,v_N)}\right|\\&=
D(r_1,v_1,\ldots,r_i,v''_i,\ldots,r_j\sigma,v''_j\ldots,r_N,v_N,t)\\&\times
\lim_{\Delta t\to0}\lim_{\begin{smallmatrix}r_j\to r_i-a\sigma,\\
|r_i-r_j|>a\end{smallmatrix}}\left|\frac{\mathcal D(R_1(-\Delta t),V_1(-\Delta t),\ldots,R_N(-\Delta t),V_N(-\Delta t))}{\mathcal D(r_1,v_1,\ldots,r_N,v_N)}\right|.
\end{split}\end{equation*}
Obviously, for sufficiently small $\Delta t$ and $|r_j-r_i|-a$, we have $R_k(-\Delta t)=r_k-v_k\Delta t$, $V_k(-\Delta t)=v_k$ for all $k\neq i,j$, $\lim\limits_{r_j\to r_i-a\sigma}V_i(-\Delta t)=v''_i$, $\lim\limits_{r_j\to r_i-a\sigma}V_j(-\Delta t)=v''_j$,
$$R_i(-\Delta t)=r_i-v_i t^*-v''_i(\Delta t-t^*),$$ 
$$R_j(-\Delta t)=r_j-v_j t^*-v''_j(\Delta t-t^*).$$ 
Here $t^*=t^*(r_1,r_2)\in(0,\Delta T)$ is the moment of the collision of the spheres $i$ and $j$ in the backward propagation, i.e., $|R_i(-t^*)-R_j(-t^*)|=a$.

Further,
\begin{equation*}
\begin{split}
\lim_{\Delta t\to0}\lim_{\begin{smallmatrix}r_j\to r_i-a\sigma,\\
|r_i-r_j|>a\end{smallmatrix}}&\left|\frac{\mathcal D(R_1(-\Delta t),V_1(-\Delta t),\ldots,R_N(-\Delta t),V_N(-\Delta t))}{\mathcal D(r_1,v_1,\ldots,r_N,v_N)}\right|\\
=\lim_{\Delta t\to0}\lim_{\begin{smallmatrix}r_j\to r_i-a\sigma,\\
|r_i-r_j|>a\end{smallmatrix}}&\left|\frac{\mathcal D(R_i(-\Delta t),v''_i,R_j(-\Delta t),v''_j)}
{\mathcal D(r_i,v_i,r_j,v_j)}\right|\\
=\lim_{\Delta t\to0}\lim_{\begin{smallmatrix}r_j\to r_i-a\sigma,\\
|r_i-r_j|>a\end{smallmatrix}}&\left|\frac{\mathcal D(R_i(-\Delta t),v_i,R_j(-\Delta t),v_j)}
{\mathcal D(r_i,v_i,r_j,v_j)}\right|\cdot\left|\frac{\mathcal D(R_i(-\Delta t),v''_i,R_j(-\Delta t),v''_j)}
{\mathcal D(R_i(-\Delta t),v_i,R_j(-\Delta t),v_j)}\right|.
\end{split}
\end{equation*}
Now,
\begin{equation}\label{EqJacobV}
\lim_{\Delta t\to0}\lim_{\begin{smallmatrix}r_j\to r_i-a\sigma,\\
|r_i-r_j|>a\end{smallmatrix}}\left|\frac{\mathcal D(R_i(-\Delta t),v''_i,R_j(-\Delta t),v''_j)}
{\mathcal D(R_i(-\Delta t),v_i,R_j(-\Delta t),v_j)}\right|=\left|\frac{\mathcal D(v''_i,v''_j)}{\mathcal D(v_i,v_j)}\right|.
\end{equation}
The parameter
$$\sigma=\frac{R_i(-t^*)-R_j(-t^*)}a$$ depends on both $R_{i,j}(-\Delta t)$ and $v_{i,j}$ (because $R_{i,j}(-t^*)=R_{i,j}(-\Delta t)+v''_{i,j}(\Delta t-t^*)$). However, by properties of Jacobians, we can consider $R_{i,j}(-\Delta t)$ to be constant in the right-hand side of (\ref{EqJacobV}). Also, in the limit, $\Delta t\to0$, $R_{i,j}(-\Delta t)\to R_{i,j}(-t^*)$, and $\sigma$ does not depend on $v$.

To finish the proof, we only need to prove that 
$$\lim_{\begin{smallmatrix}r_j\to r_i-a\sigma,\\
|r_i-r_j|>a\end{smallmatrix}}\left|\frac{\mathcal D(R_i(-\Delta t),R_j(-\Delta t))}
{\mathcal D(r_i,r_j)}\right|=\frac1\mu,$$
where we  consider $v_{i,j}$ to be constant in the calculation of the left-hand side. Also remind that $\mu=\mu(|(v''_j-v''_i,\sigma)|)$. We have

$$\left|\frac{\mathcal D(R_i(-\Delta t),R_j(-\Delta t))}
{\mathcal D(r_i,r_j)}\right|=\left|\frac{\mathcal D(R_i(-\Delta t)-R_j(-\Delta t),R_i(-\Delta t)+R_j(-\Delta t))}
{\mathcal D(r_i-r_j,r_i+r_j)}\right|=\left|\frac{\mathcal DR}
{\mathcal Dr}\right|,$$
where 
\begin{equation}\label{EqRr}
r=\frac{r_i-r_j}a,\quad R=\frac{R_i(-\Delta t)-R_j(-\Delta t)}a.
\end{equation} 
Here we used that, by the law of momentum conservation, 
$$R_i(-\Delta t)+R_j(-\Delta t)=r_i+r_j-(v_i+v_j)\Delta t,\quad\frac{\mathcal D(R_i(-\Delta t)+R_j(-\Delta t))}{\mathcal D(r_i+r_j)}=1.
$$
Variables (\ref{EqRr}) correspond to a reference frame attached to the $j$th hard sphere. Let us introduce the cylindrical coordinates $r=(\rho,\varphi,z)$ and choose $z$-axis along $v_i-v_j\equiv v$, i.e., $v=(0,0,|v|)$. Then $\sigma=(\rho,\varphi,\sqrt{1-\rho^2})$ (see Figure~1, $a=1$ in our rescaled variables), $(v_1-v_2,\sigma)=\sqrt{1-\rho^2}$,
$$v''_i-v''_j\equiv v''=v-\frac{1+\mu}\mu(v,\sigma)\sigma=(\frac{1+\mu}\mu|v|\rho\sqrt{1-\rho^2},\varphi+\pi,|v|-\frac{1+\mu}\mu|v|(1-\rho^2))\equiv(v''_\rho,\varphi+\pi,v''_z),$$
$$R=\sigma-v''(\Delta t-t^*)=(\rho+v''_\rho(\Delta t-t^*),\varphi,\sqrt{1-\rho^2}-v''_z(\Delta t-t^*))\equiv(R_\rho,\varphi,R_z),$$
$$t^*=\frac{z-\sqrt{1-\rho^2}}{|v|},$$
$\mu=\mu(\sqrt{1-\rho^2})$. We have
\begin{align*}
\lim_{\Delta t\to0}\lim_{\begin{smallmatrix}r_j\to r_i-a\sigma,\\
|r_i-r_j|>a\end{smallmatrix}}\frac{\partial R_\rho}{\partial\rho}&=1-v''_\rho\frac{\partial t^*}{\partial\rho},\\
\lim_{\Delta t\to0}\lim_{\begin{smallmatrix}r_j\to r_i-a\sigma,\\
|r_i-r_j|>a\end{smallmatrix}}\frac{\partial R_z}{\partial\rho}&=-\frac\rho{\sqrt{1-\rho^2}}+v''_z\frac{\partial t^*}{\partial\rho}
\end{align*}
(the terms $(\partial v''_{\rho,z}/\partial\rho)(\Delta t-t^*)$ disappear in the limit). Note also that $\frac{R_\rho}\rho$ (the ratio of Jacobians of transformations from Cartesian to the cylindrical coordinates in $R$ and $r$) tends to unity in this limit. $R_\rho$ and $R_z$ depend on $z$ only through $t^*$. So,
$$\lim_{\Delta t\to0}\lim_{\begin{smallmatrix}r_j\to r_i-a\sigma,\\
|r_i-r_j|>a\end{smallmatrix}}\frac{\mathcal DR}
{\mathcal Dr}=
\begin{vmatrix}
1-v''_\rho\frac{\partial t^*}{\partial\rho} & 
-v''_\rho\frac{\partial t^*}{\partial z}\\
-\frac\rho{\sqrt{1-\rho^2}}+v''_z\frac{\partial t^*}{\partial\rho} &
v''_z\frac{\partial t^*}{\partial z}
\end{vmatrix}=v''_z\frac{\partial t^*}{\partial z}-v''_\rho\frac{\partial t^*}{\partial z}\frac\rho{\sqrt{1-\rho^2}}=-\frac1\mu.
$$
The lemma is proved.
\end{proof}

\begin{figure}\centering
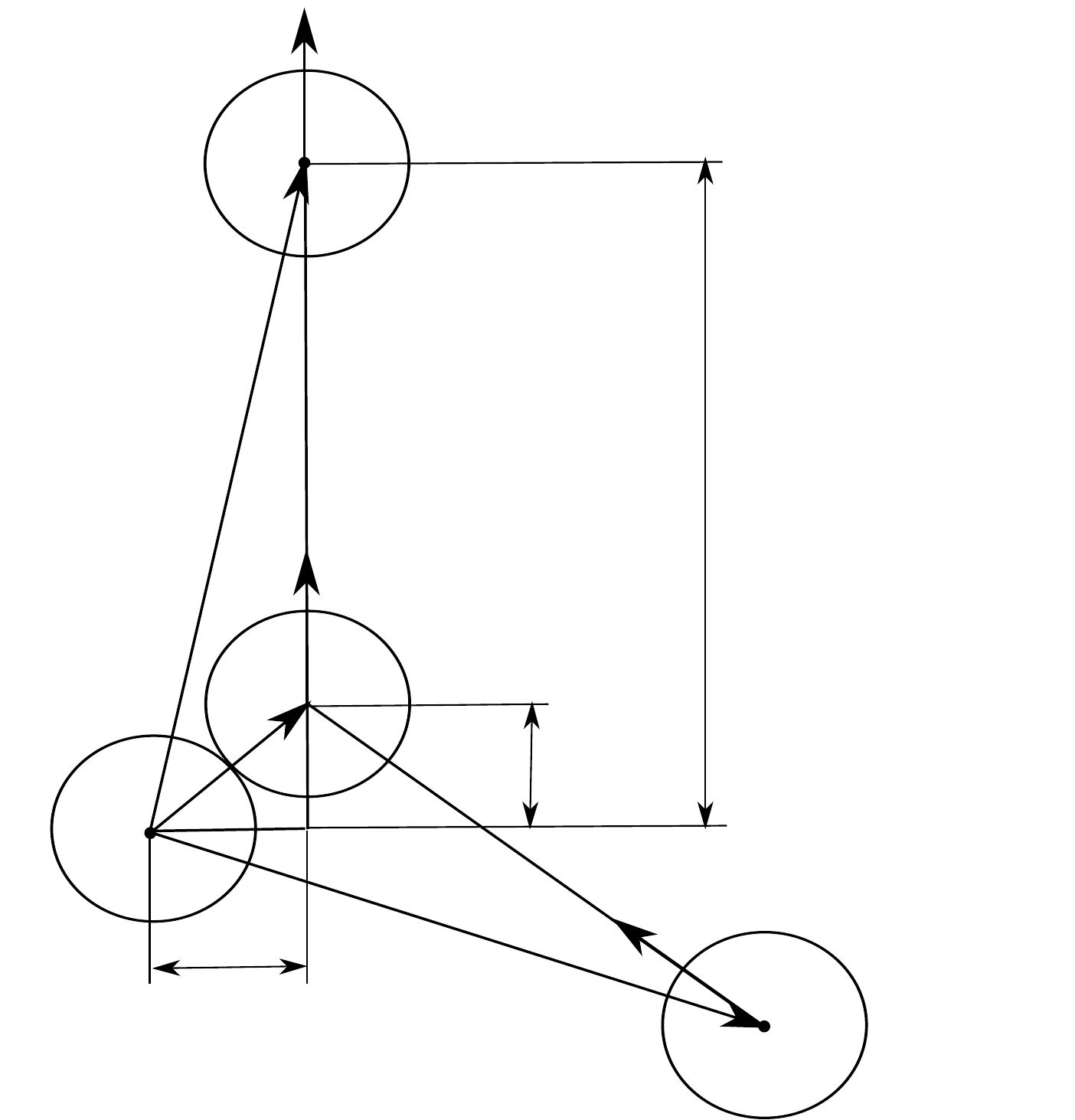
\caption{To the proof of Lemma~\ref{LemJacob}. Cylindrical coordinates $(\rho,\varphi,z)$, the plane $\varphi=const$. The frame of reference is attached to the second ball. The first ball is showed in three different time instants.}
\end{figure}

Again, $f_\varepsilon(r_1,v_1,t)$ and $F_1^\varepsilon(r_1,v_1,t)$ coincide apart from infinitesimal neighbourhoods of the moments of collisions. The counterparts of Lemmas~\ref{LemFactor}, \ref{LemF1}, and \ref{LemEvol} are proved in completely the same way as for the case of elastic hard spheres. In particular, calculations in (\ref{EqCalc}) are performed in such a way (the notations are the same):
\begin{equation*}\begin{split}
f_\varepsilon(r_1,v_1,t)&=\frac{1}{(N-1)!}\int_{G_{N-1}} S^{(N)}_{-t}\left\lbrace\prod_{i=1}^Nf^0_\varepsilon(r_i,v_i)\right\rbrace\prod_{j=2}^N dr_jdv_j\\
&=\frac{1}{(N-1)!}\int_{G_{N-1}} |J|\prod_{i=1}^Nf^0_\varepsilon(R_i(-t),V_i(-t))\prod_{j=2}^N dr_jdv_j\\
&=\frac{1}{(N-1)!}\int_{G_N\times G_{N-1}} |J|\prod_{i=1}^N\{f^0_\varepsilon(p_i^0,u_i^0)\delta(p_i^0-R_i(-t),u_i^0-V_i(-t))\,dp_i^0du_i^0\}\prod_{j=2}^N dr_jdv_j\\
&=\frac{1}{(N-1)!}\int_{G_N\times G_{N-1}}|J|\cdot|J|^{-1}\prod_{i=1}^N\{f^0_\varepsilon(p_i^0,u_i^0)\delta(r_i-p_i(t),v_i-u_i(t))\,dp_i^0du_i^0\}\prod_{j=2}^N dr_jdv_j\\
&=\frac{1}{(N-1)!}\int_{G_N\times G_{N-1}} \prod_{i=1}^N\lbrace f^0_\varepsilon(p_i^0,u_i^0)\delta(r_i-p_i(t),v_i-u_i(t))\,dp_i^0du_i^0\rbrace\prod_{j=2}^N dr_jdv_j.
\end{split}
\end{equation*}

\section{Generalized Enskog equation}\label{SecGE}

Thus, we see that smooth functions of form (\ref{Eqfeps}) (or (\ref{EqfepsIn}) for the case of inelastic hard spheres) satisfy the Boltzmann--Enskog equation asymptotically as $\varepsilon\to0$. It is interesting that  similar construction leads to exact smooth solutions for the recently proposed generalized Enskog equation \cite{GerasGap}, which is more general than the Boltzmann--Enskog one.

Here we again consider the case of elastic hard spheres, the generalization to the inelastic case is straightforward. The generalized Enskog equation has the form

\begin{equation}\label{EqGE}
\frac{\partial F_1}{\partial t}=-v_1\frac{\partial F_1}{\partial r_1}+Q_{GE}(F_1,F_1),
\end{equation}
\begin{multline*}
Q_{GE}(F_1,F_1)(r_1,v_1,t)=na^2\int_{\Omega_{v_1}}
(v_{21},\sigma)[F_2(r_1,v'_1,r_1+a\sigma,v'_2|F_1(t))\\-F_2(r_1,v_1,r_1-a\sigma,v_2|F_1(t))]
d\sigma dv_2.
\end{multline*}
Here $F_1=F_1(r_1,v_1,t)$ (now we suppose that $r_1\in\mathbb R^3$) has the meaning of the single-particle density function, $F_2=F_2(r_1,v_1,r_2,v_2|F_1(t))$ is the two-particle density function. It depends on time only through $F_1$:
\begin{equation}\label{EqF2GEE}
F_2(r_1,v_1,r_2,v_2|F_1(t))=
\sum_{n=0}^\infty\frac1{n!}\mathfrak V_t^{(1+n)}(\{1,2\},3,\ldots,n+2)\prod_{i=1}^{n+2}F_1(r_i,v_i,t),
\end{equation}

\begin{equation*}
\begin{split}
&\mathfrak V^{(1+n)}_t(\{Y\},X\backslash Y)=n!\sum_{k=0}^n(-1)^k
\sum_{m_1=1}^n\ldots\sum_{m_k=1}^{n-m_1-\ldots-m_{k-1}}
\frac1{(n-m_1-\ldots-m_{k})!}\\&\times
\widehat{\mathfrak A}_t^{(1+n-m_1-\ldots-m_{k})}(\{Y\},s+1,s+2,\ldots,s+n-m_1-\ldots-m_{k})\\&\times
\prod_{j=1}^k\sum_{k^j_2=0}^{m_j}\ldots
\sum_{k^j_{n-m_1-\ldots-m_{j}+s}=0}^{k^j_{n-m_1-\ldots-m_j+s-1}}
\prod_{i_j=1}^{s+n-m_1-\ldots-m_j}
\frac1{(k^j_{n-m_1-\ldots-m_j+s+1-i_j}-k^j_{n-m_1-\ldots-m_j+s+2-i_j})!}\\&\times
\widehat{\mathfrak A}_t^{(1+k^j_{n-m_1-\ldots-m_j+s+1-i_j}-
k^j_{n-m_1-\ldots-m_j+s+2-i_j})}(i_j,s+n-m_1-\ldots-m_j+1\\&
+k^j_{s+n-m_1-\ldots-m_j+2-i_j},\ldots,s+n-m_1-\ldots-m_j+k^j_{s+n-m_1-\ldots-m_j+1-i_j}),
\end{split}
\end{equation*}
where we mean $k^j_1=m_j$, $k^j_{n-m_1-\ldots-m_{j+s+1}}=0$. Also,

$$\widehat{\mathfrak{A}}_t^{(1+n)}(\{Y\},X\backslash Y)=
\mathfrak{A}_{-t}^{(1+n)}(\{Y\},X\backslash Y)\mathfrak{I}_{s+n}\prod_{i=1}^{s+n}
\mathfrak A^{(1)}_t(i).$$

The operator $\mathfrak A_{t}^{(1+n)}$ is  defined as follows:
$$\mathfrak A_{t}^{(1+n)}(\{Y\},X\backslash Y)=\sum_{P:\,(\{Y\},X\backslash Y)=\bigcup_iX_i}(-1)^{|P|-1}(|P|-1)!\prod_{X_i\in P}S^{(|\theta(X_i)|)}_{t}(\theta(X_i)).$$
The notation $S^{(N)}_{t}(i_1,i_2,\ldots,i_k)$ means that the phase shift operator acts on the variables with the numbers $i_1,\ldots,i_k$ (a function may depend on larger number of pairs of variables $(r_i,v_i)$).  Now we treat $(S^{(k)}_t(i_1,i_2,\ldots,i_k)\varphi)(r_1,v_1,\ldots,r_N,v_N)=0$ whenever $|r_i-r_j|<a$ for some $i\neq j$. If $k=N$, we will still write $S^{(N)}_{-t}\varphi$, because, in this case $i_1,\ldots,i_N$ are defined unambiguously: $i_1=1,\ldots,i_N=N$.

Further, $\{Y\}$ is a set consisting of a single element $Y=\{1,2,\ldots,s\}$, i.e., $|\{Y\}|=1$, $X=\{1,2,\ldots,s+n\}$. The sum is performed over all partitions $P$ of the set $\{\{Y\},X\backslash Y\}=\{\{Y\},s+1,\ldots,s+n\}$ into $|P|$ non-empty mutually disjoint subsets $X_i\subset \{\{Y\},X\backslash Y\}$. The mapping $\theta$ is the declasterization mapping defined by the formulae
\begin{equation*}
\begin{split}
\theta(\{\{1,\ldots,s\},i_1,i_2,\ldots,i_k\})&=\{1,\ldots,s,i_1,i_2,\ldots,i_k\},\\
\theta(\{i_1,i_2,\ldots,i_k\})&=\{i_1,i_2,\ldots,i_k\}.
\end{split}
\end{equation*}

The operator $\mathfrak{I}_{s+n}$ acts on functions $f_{s+n}(r_1,v_1,\ldots,r_{s+n},v_{s+n})$ as follows:
$$\mathfrak{I}_{s+n}f_{s+n}=\mathcal X_{s+n}f_{s+n},$$
where 
$$\mathcal X_k=\mathcal X_k(r_1,\ldots,r_k)=
\begin{cases}
1,&|r_i-r_j|\geq a\text{ for all } i\neq j,\\
0,& \text{otherwise}.\end{cases}$$

The operator $\mathfrak A_{-t}^{(n+1)}$ is called ``$(n+1)$-th order cumulant of operators $S$'', the operator $\widehat{\mathfrak A}_t^{(n+1)}$ is called ``$(n+1)$-th order scattering cumulant of operators $S$''. The cumulants  of the lowest orders have the form
\begin{align*}
&\mathfrak A_{-t}^{(1)}(\{1,2\})=S^{(2)}_{-t}(1,2),\\
&\mathfrak A_{-t}^{(2)}(\{1,2\},3)=S^{(3)}_{-t}(1,2,3)-S^{(2)}_{-t}(1,2)S^{(1)}_{-t}(3).
\end{align*}

It can be shown \cite{GerasGap} that the operator $\mathfrak{V}_t^{(1)}(\{1,2\})$ is the identity operator, so, the term $n=0$ in (\ref{EqF2GEE}) corresponds to the Boltzmann--Enskog equation. Thus, equation (\ref{EqGE}) is a generalization of the Boltzmann--Enskog equation.

The two-particle distribution function $F_2$ can be expressed in terms of the initial function $F^0_1(r,v)\equiv F_1(r,v,0)$ as follows:
\begin{multline}\label{EqF2GE}
F_2(r_1,v_1,r_2,v_2|F_1(t))=S^{(2)}_{-t}\left\lbrace\prod_{i=1}^{2}F^0_1(r_i,v_i)\right\rbrace\\+\sum_{n=1}^\infty\frac1{n!}\int_{\mathbb R^{6n}}\mathfrak{A}_{-t}^{(1+n)}(\{1,2\},3,\ldots,n+2)\mathcal X_{n+2}\prod_{i=1}^{n+2}F^0_1(r_i,v_i)\,dr_{3}dv_3\ldots dr_{n+2}dv_{n+2}.
\end{multline}
A solution of (\ref{EqGE}) is given by
\begin{multline}\label{EqGESol}
F_1(r_1,v_1,t)=F^0_1(r_1-v_1t,v_1)\\+\sum_{n=1}^\infty\frac1{n!}\int_{\mathbb R^{6n}}
\mathfrak A_{-t}^{(1+n)}(1,2,\ldots,n+1)\mathcal X_{n+1}\prod_{i=1}^{n+1}F^0_1(r_i,v_i)\,dr_{2}dv_2\ldots dr_{n+1}dv_{n+1}.
\end{multline}

Note that equation (\ref{EqGE}) is derived in such a way that it satisfies the dynamics of infinite number of particles with the initial $s$-particle distribution functions
$$F^0_s(r_1,v_1,\ldots,r_s,v_s)=\prod_{i=1}^sF_1^0(r_i,v_i)\mathcal X_s.$$

\section{Explicit soliton-like solutions of the generalized Enskog equation}\label{SecGESol}

Let $\delta^i_\varepsilon(r,v):\mathbb R^6\to\mathbb R$, $i=1,\ldots,N$, and $f^0_\varepsilon(r,v)$, $\varepsilon>0$, be the same functions as in Section~\ref{SecSol}. Let, additionally, $\varepsilon$ is such that  $\delta^i_\varepsilon(r,v)=0$ if $|r|>a$ for all $i$. Now consider the function $f_\varepsilon(r_1,v_1,t)$ defined as
\begin{equation}\label{EqGEfeps}
f_\varepsilon(r_1,v_1,t)=\frac{1}{(N-1)!}\int_{\mathbb R^{6(N-1)}} S^{(N)}_{-t}\left\lbrace\prod_{i=1}^Nf^0_\varepsilon(r_i,v_i)\right\rbrace\prod_{j=2}^N dr_jdv_j
\end{equation}

\begin{theorem}\label{ThGESol}
The function $f_{\varepsilon}(r_1,v_1,t)$ is a solution of the  generalized Enskog equation (\ref{EqGE}) for $t>0$.
\end{theorem}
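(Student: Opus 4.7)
The plan is to show that $f_\varepsilon$ defined by (\ref{EqGEfeps}) coincides with the right-hand side of the explicit solution formula (\ref{EqGESol}) of the generalized Enskog equation when one substitutes the initial datum $F_1^0 = f^0_\varepsilon$. Since, by the construction of \cite{GerasGap}, formula (\ref{EqGESol}) is a solution of (\ref{EqGE}) for arbitrary initial data, this identification establishes the theorem.

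First, I would shrink $\varepsilon$ by property~(iii), strengthening if necessary the stated condition $\delta^i_\varepsilon(r,v)=0$ for $|r|>a$, so that the $N$ bumps $\delta^i_\varepsilon(\cdot-q_i^0,\cdot-w_i^0)$ composing $f^0_\varepsilon$ have pairwise disjoint supports of diameter much smaller than $\min_{i\neq j}|q_i^0-q_j^0|-a$. A pigeonhole argument then yields $\prod_{i=1}^{s}f^0_\varepsilon(r_i,v_i)\cdot\mathcal X_s\equiv 0$ for every $s>N$, since any $s>N$ points would force two of them into a common bump and violate the hard-sphere exclusion $\mathcal X_s$. Consequently, the infinite series on the right of (\ref{EqGESol}) collapses to a finite sum over $n=0,1,\ldots,N-1$.

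Next, I would expand the $N$-particle evolution operator in (\ref{EqGEfeps}) by the M\"obius inverse of the cumulant definition,
\[
S^{(N)}_{-t}(1,\ldots,N)=\sum_{P}\prod_{X\in P}\mathfrak A_{-t}^{(|X|)}(X),
\]
where $P$ ranges over set partitions of $\{1,\ldots,N\}$ into non-empty blocks $X$. After integrating out $r_2,v_2,\ldots,r_N,v_N$ in (\ref{EqGEfeps}), the integrand factorizes over the blocks of $P$ because each $\mathfrak A_{-t}^{(|X|)}(X)$ acts only on the variables indexed by $X$. Grouping partitions by the size $k=|X_0|$ of the block $X_0\ni 1$, and using $\int f^0_\varepsilon\,dr\,dv=N$ to dispose of singleton external blocks, the sum of contributions from partitions with $|X_0|=k$ should reassemble into the $n=k-1$ term of the truncated (\ref{EqGESol}).

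The main obstacle is the combinatorial bookkeeping in this final reassembly: one must verify that the prefactor $1/(N-1)!$ in (\ref{EqGEfeps}), combined with the counts of set partitions of the remaining $N-k$ indices and the integrals $\int\mathfrak A_{-t}^{(|X|)}(X)\prod_{i\in X}f^0_\varepsilon\,dr\,dv$ contributed by external blocks of size $\geq 2$, collapses precisely to the coefficient $1/n!$ appearing in (\ref{EqGESol}). A more conceptual alternative that bypasses the explicit M\"obius calculation is to invoke the construction of \cite{GerasGap} directly: for any factorized initial data $F^0_s=\prod_{i=1}^{s}F_1^0(r_i,v_i)\cdot\mathcal X_s$ the first marginal of the BBGKY hierarchy solves (\ref{EqGE}); for our $F_1^0=f^0_\varepsilon$ the hierarchy truncates at $N$ particles (since $F^0_s\equiv 0$ for $s>N$), and (\ref{EqGEfeps}) is by inspection its first marginal.
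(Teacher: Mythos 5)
Your proposal is correct and follows essentially the same route as the paper: truncate the series (\ref{EqGESol}) at $n=N-1$ by the support/exclusion pigeonhole, then identify the truncated sum with (\ref{EqGEfeps}) through the cumulant combinatorics --- the paper expands the cumulants into products of $S$-operators and observes that everything except $S^{(N)}_{-t}\{\prod_{i=1}^N f^0_\varepsilon\}$ cancels, which is just the inverse direction of your M\"obius expansion. Your explicit caveat that the stated support condition $\delta^i_\varepsilon(r,v)=0$ for $|r|>a$ may need strengthening (so that two points of one bump are necessarily within distance $a$ of each other) addresses a point the paper glosses over.
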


\begin{proof}
Under the stated conditions, the series in (\ref{EqGESol}) has a maximal term $n=N-1$, because there are no $N+1$ points from the support of $F_1^0(r,v)\equiv f_\varepsilon^0(r,v)$ with the distance equal or greater than $a$ from each other. So, (\ref{EqGESol}) has the form
\begin{multline*}
F_1^\varepsilon(r_1,v_1,t)=f^0_\varepsilon(r_1-v_1t,v_1)\\+\sum_{n=1}^{N-1}\frac1{n!}\int_{\mathbb R^{6n}}
\mathfrak A_{-t}^{(n+1)}(1,2,\ldots,n+1)\mathcal X_{n+1}\prod_{i=1}^{n+1}f^0_\varepsilon(r_i,v_i)\,dr_{2}dv_2\ldots dr_{n+1}dv_{n+1}.
\end{multline*}
But this coincides with (\ref{EqGEfeps}). Indeed, if we substitute here the expression for $\mathfrak A_{n+1}$, then the only surviving term is $S^{(N)}_{-t}\{\prod_{i=1}^{N}f^0_\varepsilon(r_i,v_i)\}$, all other terms cancel each other.
\end{proof}

Thus, we obtained a family of explicit smooth solutions for the generalized Enskog equation. In general, this equation describes the dynamics of an infinite number of particles. The crucial observation is that special single-particle distribution functions admits only a finite number of particles. In this case, the solution of this equation can be expressed in terms of the evolution operator of a finite number of particles.

The constructed solutions can be compared to  multisoliton
(or particle-like) solutions of the Korteweg--de Vries equation: it also describes the evolution of particle-like ``bumps''. The
difference is that our solutions maintain their particle-like
(localized) form only for a finite time interval. Our solutions can be referred to as ``particle-like'' (regularizations of microscopic solutions, which correspond to trajectories of particles) or ``soliton-like'' solutions.

Let us consider the simplest nontrivial case $N=2$. Then equation
(\ref{EqGE}) for $f_\varepsilon$ is reduced to

\begin{equation}\label{EqGE2}
\frac{\partial f_\varepsilon}{\partial t}=-v_1\frac{\partial f_\varepsilon}{\partial r_1}+Q^{(2)}_{GE}(f_\varepsilon,f_\varepsilon),
\end{equation}
\begin{multline*}
Q^{(2)}_{GE}(f_\varepsilon,f_\varepsilon)(r_1,v_1,t)=na^2\int_{\Omega_{v_1}}
(v_{21},\sigma)f_\varepsilon(r_1,v'_1,t)f_\varepsilon(r_1+a\sigma,v'_2,t)\zeta(r_1,v'_1,t)\zeta(r_1+a\sigma,v'_2,t)\\-f_\varepsilon(r_1,v_1,t)f_\varepsilon(r_1-a\sigma,v_2,t)\zeta(r_1,v_1,t)\zeta(r_1-a\sigma,v_2,t)]
d\sigma dv_2,
\end{multline*}
where
$$\zeta(r_1,v_1,t)=\left(\int_{B^-_{r_1,v_1,t}} f^0_\varepsilon(r_2-v_2t,v_2)\,dr_2dv_2\right)^{-1}.$$

Indeed, since we deal with the unbounded configuration space, the two particles collide at most once. Hence, an explicit expression of (\ref{EqGEfeps}) is 
\begin{multline}\label{EqF1explicit2}
f_{\varepsilon}(r_1,v_1,t)= f^0_\varepsilon(r_1-v_1t,v_1)
\int_{B^-_{r_1,v_1,t}}f_\varepsilon^0(r_2-v_2t,v_2)\,dr_2dv_2\\+
\int_{B^+_{r_1,v_1,t}} f^0_\varepsilon(r_1-v_1t^*-v'_1(t-t^*)\sigma,v'_1)\,f^0_\varepsilon(r_2-v_2t^*-v'_2(t-t^*),v'_2) \,dr_2dv_2
\end{multline}
(cf. (\ref{EqF1explicit}), the notations are the same). As wee just saw, in this particular case, formula (\ref{EqGESol}) is reduced to the dynamics of two particles and, hence, (\ref{EqGE}) is reduced to the first equation of the BBGKY hierarchy 

\begin{multline}\label{EqBEBeps2}
\frac{\partial F^\varepsilon_1(r_1,v_1,t)}{\partial t}=-v_1\frac{\partial F^\varepsilon_1(r_1,v_1,t)}{\partial r_1}+a^2\int_{\Omega_{v_1}}
(v_{21},\sigma)[F^\varepsilon_2(r_1,v'_1,r_1+a\sigma,v'_2,t)-\\-F^\varepsilon_2(r_1,v_1,r_1-a\sigma,v_1,t)]
d\sigma dv_2.
\end{multline}
with $F^\varepsilon_1(r,v,t)=f_\varepsilon(r,v,t)$ and $F^\varepsilon_2(r_1,v_1,r_2,v_2,t)=S^{(2)}_{-t}\{\prod_{i=1}^2 f^0_\varepsilon(r_i,v_i)\}$. Since the particles collide at most once, (\ref{EqFactor}) is satisfied:

\begin{equation*}
F_2^\varepsilon(r_1,v_1,r_1-a\sigma,v_2,t)=F_2^\varepsilon(r_1-v_1t,v_1,r_1-a\sigma-v_2t,v_2,0)= f^0_\varepsilon(r_1-v_1t,v_1)f^0_\varepsilon(r_1-a\sigma-v_2t,v_2)
\end{equation*}
for all $r_1,v_1,v_2,$ and $\sigma$ such that $(v_2-v_1,\sigma)\geq 0$, for all $t\in[0,T_1]$. Since we consider the moment before the collision, the second term in the right-hand side of (\ref{EqF1explicit2}) is zero, because its integral is taken over the states that lead to collision in the backward evolution (see more detailed explanations in the proof of Theorem~\ref{ThSol}). So,
\begin{equation}\label{EqFactor2}
\begin{split}
F_2^\varepsilon(r_1,v_1,r_1-a\sigma,v_2,t)&=f^0_\varepsilon(r_1-v_1t,v_1)f^0_\varepsilon(r_1-a\sigma-v_2t,v_2)\\&=f_\varepsilon(r_1-v_1t,v_1,t)\zeta(r_1,v_1,t)f_\varepsilon(r_1-a\sigma-v_2t,v_2,t)\zeta(r_1-a\sigma,v_2,t)
\end{split}
\end{equation}
Substitution of (\ref{EqFactor2}) to (\ref{EqBEBeps2}) gives (\ref{EqGE2}).

We see, that the corrections  to the Boltzmann--Enskog equation can be expressed by additional factors $\zeta$ (depending on the initial distribution $f^0$). As we see in Section~\ref{SecSol}, if $\varepsilon\to0$, then these factors matter only on infinitesimal time intervals. 

Thus, for this special form of initial conditions, the generalized Enskog equation and its solutions have especially simple forms (\ref{EqGE2}) and (\ref{EqF1explicit2}) correspondingly. The function $t^*(r_1,v_1,r_2,v_2)=t^*(r_2-r_1,v_2-v_1)$ can be explicitly expressed: it is the minimal solution of the quadratic equation $|r_2-r_1-(v_2-v_1)t|=a$ (or $\infty$ if a real solution does not exist or if there are negative solutions), i.e.,
\begin{equation*}
t^*(r,v)=\begin{cases}
\frac1{|v|}\left(\frac{(v,r)}{|v|}-
\sqrt{a^2-r^2+\frac{(v,r)^2}{v^2}}\right),&|r|\geq a,\,a^2-r^2+\frac{(v,r)^2}{v^2}\geq 0,\,(v,r)\geq0,\\
\infty,&\text{otherwise},
\end{cases}
\end{equation*}
where $r=r_2-r_1$, $v=v_2-v_1$.

Finally, let us note that the generalization of the results of this section to the inelastic hard spheres is straightforward: one need only to redefine the operator $S_{-t}^{(N)}$ again and to add a factor $\chi$ (see Section~\ref{SecInel}) to the gain term of the collision integrals.

\section{Conclusions}

We have given a rigorous sense to the microscopic solutions discovered by N.\,N.~Bogolyubov. They are time-reversible. So, the reversibility/irreversibility property of the Boltzmann--Enskog equation depends on the  considered class of solutions. If the considered solutions have the form of sums of delta-functions, then this equation is reversible. If the considered solutions belong to the class of continuously differentiable functions, then the Boltzmann-Enskog equation is irreversible. The qualitative differences in solutions of evolution equations depending on the considered functional space is analysed in \cite{Kozlov, KozTr, KozTr2}, it is crucial in so called functional mechanics \cite{1,2,3,4,5,6,7,8}. 

Also we have found exact solutions for recently proposed generalized Enskog equation. The construction of these solutions is similar to the construction of regularizations of microscopic solutions for the Boltzmann--Enskog equation. These solutions are expressed in terms of the evolution operator for a finite number of hard spheres. In particular, for the case of two hard spheres, these solutions as well as the generalized Enskog equation itself have especially simple forms. 

The constructed solutions of the generalized Boltzmann--Enskog equation can be referred to as ``soliton-like'' solutions and  are analogues of multisoliton solutions of the Korteweg--de Vries equation. A difference is that our ``solitons'' spread out over time.

Also it is worthwhile to mention the particle-like (microscopic,
singular) wave propagations in solutions of the Boltzmann equation
for hard spheres, which dominate the short-time behaviour of 
solutions (see \cite{LiuYuGreen,LiuYu}, a generalization to hard
potential model is given in \cite{LeeLiuYu}). On the contrary,
fluid-like waves reveal the long-time dissipative behaviour of the
solutions. An investigation of relations between the microscopic solutions of the Boltzmann--Enskog equation and these
two types of waves in solutions of the Boltzmann equation for hard
spheres can be of interest.

\section*{Acknowledgements}
The author is grateful for useful remarks and discussions to H.~van Beijeren, T.\,V.~Dudnikova, V.\,V.~Kozlov, S.\,V.~Kozyrev, A.\,I.~Mikhailov, T.~Monnai, A.\,N.~Pechen, E.\,V.~Radkevich, S.~Sasa, O.\,G.~Smolyanov, H.~Spohn,  H.~Tasaki, D.\,V.~Treshchev, V.\,V.~Vedenyapin, I.\,V.~Volovich, and E.\,I.~Zelenov.   This work was partially supported by the Russian Foundation for Basic Research (project 12-01-31273-mol-a).

\end{document}